\providecommand{\PaperFormat}{letterpaper}

\makeatletter
\@ifundefined{ReRun}{%

\@ifundefined{PaperFormat}{\newcommand{\PaperFormat}{letterpaper}}{}

\documentclass[11pt,\PaperFormat]{article}
\usepackage[margin=1in]{geometry}
\usepackage{times}
\usepackage{amssymb,amsfonts,amsmath,amsthm}
\usepackage{authblk}
\usepackage{color}
\usepackage{xspace}
\usepackage{url}
%\usepackage{proof}

%%%%%%%%%%%%%%%%%%%%%%%

\newtheorem{theorem}{Theorem}[section]
\newtheorem{proposition}{Proposition}[section]
\newtheorem{lemma}{Lemma}[section]
\newtheorem{corollary}{Corollary}[section]
\newtheorem{claim}{Claim}[section]

\newcommand{\ch}{\mbox{\sf ch}}
\newcommand{\id}{\mbox{\sf id}}
\newcommand{\inp}{\mbox{\sf inp}}
\newcommand{\out}{\mbox{\sf out}}
\newcommand{\poly}{\mbox{poly}}
\newcommand{\caA}{{\cal A}}
\newcommand{\caB}{{\cal B}}
\newcommand{\caC}{{\cal C}}
\newcommand{\Otilde}{\widetilde O}
\newcommand{\oDelta}{\vec \Delta}
\newcommand{\eps}{\varepsilon}
\newcommand{\LOCAL}{{\sc local}\xspace}

%%%%%%%%%%%%%%%%%%%%%%%%%%%%%%%%%%%%%%%%%%%%%%%%%%%%%
\begin{document}
%%%%%%%%%%%%%%%%%%%%%%%%%%%%%%%%%%%%%%%%%%%%%%%%%%%%%

\title{\bf  Local Conflict Coloring}

\author[1,2]{Pierre Fraigniaud\protect\@ifundefined{ReRun}{\thanks{Additional support from ANR project DISPLEXITY.}}{\protect\footnotemark[1]}}
\author[1,2]{Marc Heinrich\protect\@ifundefined{ReRun}{\thanks{Supported by ENS Paris.}}{\protect\footnotemark[2]}}
\author[1,2]{Adrian Kosowski\protect\footnotemark[1]}

\affil[1]{\small CNRS and Paris Diderot University}
\affil[2]{\small GANG Project, Inria Paris}

\date{}

}{}

\providecommand{\InShort}{}
\providecommand{\InFull}{}

\renewcommand{\InShort}[1]{}
\renewcommand{\InFull}[1]{#1}

\@ifundefined{FullVersion}{%
	\renewcommand{\InShort}[1]{#1}
	\renewcommand{\InFull}[1]{}
}{}

\makeatother

\maketitle

\begin{abstract}
Locally finding a solution to \emph{symmetry-breaking} tasks such as vertex-coloring, edge-coloring, maximal matching, maximal independent set, etc., is a long-standing challenge in \emph{distributed network} computing. More recently, it has also become a challenge in the framework of \emph{centralized local} computation. We introduce \emph{conflict coloring} as a general symmetry-breaking task that includes all the aforementioned tasks as specific instantiations --- conflict coloring includes all \emph{locally checkable labeling}~tasks from [Naor\ \&\ Stockmeyer, STOC 1993]. Conflict coloring is characterized by two parameters $l$ and $d$, where the former measures the amount of freedom given to the nodes for selecting their colors, and the latter measures the number of constraints which colors of adjacent nodes are subject to.
 We show that, in the standard \LOCAL model for distributed network computing, if $l/d > \Delta$, then conflict coloring can be solved in $\Otilde(\sqrt{\Delta})+\log^*n$ rounds in $n$-node graphs with maximum degree~$\Delta$, where $\Otilde$ ignores the polylog factors in $\Delta$. The dependency in~$n$ is optimal, as a consequence of the $\Omega(\log^*n)$ lower bound by [Linial, SIAM J. Comp. 1992] for $(\Delta+1)$-coloring. An important special case of our  result is a significant improvement over the best known algorithm for  distributed $(\Delta+1)$-coloring due to [Barenboim, PODC 2015], which required $\Otilde(\Delta^{3/4})+\log^*n$ rounds. Improvements for other variants of coloring, including  $(\Delta+1)$-list-coloring, $(2\Delta-1)$-edge-coloring, coloring with forbidden color distances, etc., also follow from our general result on conflict coloring. Likewise, in the framework of centralized local computation algorithms (LCAs), our general result yields an LCA which requires a smaller number of probes than the previously best known algorithm for vertex-coloring, and works for a wide range of coloring problems.

\bigbreak

\noindent {\bf Keywords:} Distributed Network Computing, Symmetry Breaking, List-coloring, $(\Delta+1)$-coloring, Local Computation Algorithm.
\end{abstract}

\vfill

\thispagestyle{empty}
\pagebreak
%\InShort{
\setcounter{page}{1}
%}

%%%%%%%%%%%%%%%%%%%%%%%%%%%%%%%%%%%%%%%%%%%%%%%%%%%%%
\section{Introduction}
%%%%%%%%%%%%%%%%%%%%%%%%%%%%%%%%%%%%%%%%%%%%%%%%%%%%%

\subsection{Context and Objective}

\emph{Distributed network computing} considers the computing model in which every node of a graph is an autonomous computing entity, and nodes exchange information by sending messages along the edges of the graph. In this context, \emph{symmetry breaking},  which is arguably the most important problem in distributed network computing has attracted a lot of attention, and several local forms of symmetry breaking tasks have been considered, including the construction of proper \emph{graph colorings}~\cite{BE09,BE13,Lin92,Nao91,PS92,SV93}, of \emph{maximal independent sets} (MIS)~\cite{ABI86,Lub86}, of \emph{maximal matchings}~\cite{Goo14,Hir12}, etc., to mention just a few. The main question in this framework is whether these tasks can be solved \emph{locally}, i.e., by exchanging data between nodes at short distance in the network. To tackle the locality issue, the complexity of a distributed algorithm is measured in term of number of \emph{rounds} in the  \LOCAL model~\cite{Pel00}, where a round consists in synchronously exchanging data along all the links of the network, and performing individual computations at each node. That is, a $t$-round algorithm is an algorithm in which every node exchanges data with nodes at distance at most $t$ (i.e., at most $t$ hops away) from it.

It is worth taking the example of coloring for understanding the computational challenges induced by the question of locality in distributed network computing. The main concern of distributed coloring is solving the \emph{$(\Delta+1)$-coloring task}, in which the nodes of a network $G$ are free to choose any color from the set $\{1,\dots,\Delta+1\}$, where  $\Delta$ is the maximum degree of $G$, as long as each node output a color that is different from all the colors output by its neighbors.\footnote{Solving $k$-coloring for $k<\Delta+1$ cannot be local, even if $G$ is $\Delta$-colorable, because the decision of a node can impact nodes far away from it, as witnessed by 2-coloring even cycles~\cite{Lin92}. } Several breakthroughs were almost simultaneously obtained towards the end of the 1980's. Awerbuch, Goldberg, Luby, and Plotkin~\cite{Awe89} devised a deterministic distributed $(\Delta+1)$-coloring algorithm running in a subpolynomial-in-$n$ number of rounds, which was subsequently improved by Panconesi and Srinivasan~\cite{PS92} to run in $2^{O(\sqrt{\log n})}$ rounds. Despite a quarter of a century of intensive research, this is still the best known distributed deterministic algorithm for $(\Delta+1)$-coloring in general graphs. Around the same time, Goldberg, Plotkin and Shannon~\cite{Gol88} and Linial~\cite{Lin92} designed distributed $(\Delta+1)$-coloring algorithms, performing in $O(\Delta^2+\log^*n)$ rounds, where $\log^*n$ denotes the least number of times the log-function should be applied on $n$ to get a value smaller than one\footnote{Formally, define $\log^{(0)}x= x$, and $\log^{(k+1)}x=\log\log^{(k)} x$ for $k\geq 0$; Then $\log^*x$ denotes the least integer $k$ such that $\log^{(k)}x<1$.}. These algorithms are significantly faster than the one in~\cite{PS92} for graphs with reasonably small maximum degree (e.g., $\Delta = O(\log^cn)$ for arbitrarily large constant $c>0$). Interestingly, the achieved dependence in $n$ is optimal for constant degree graphs, as \cite{Lin92} also proves that 3-coloring the $n$-node ring requires $\Omega(\log^*n)$ rounds, and this lower bound also holds for randomized algorithms~\cite{Nao91}. As a consequence, since Linial's contributions to $(\Delta+1)$-coloring, lots of effort has been devoted to decreasing the time dependence in $\Delta$ of coloring algorithms.%, and these efforts were indeed very fruitful.

Szegedy and Vishwanathan~\cite{SV93} show that a wide class of \emph{locally iterative} algorithms for $(\Delta+1)$-coloring must perform in $\Omega(\Delta\log\Delta)$ rounds, where an algorithm belongs to the locally iterative class if it has the property that, at each round, every node considers only its own current color together with the current colors of its neighbors, and updates its color value accordingly. This result was made more explicit by Kuhn and Wattenhofer~\cite{KW06}, who considered an almost identically defined model and proposed a locally iterative algorithm performing in $O(\Delta\log\Delta + \log^*n)$ rounds. Three years later, Barenboim and Elkin~\cite{BE09}, and Kuhn~\cite{Kuh09} independently proposed distributed $(\Delta+1)$-coloring  algorithms performing in  $O(\Delta+ \log^*n)$ rounds (see also~\cite{BEK14}). These latter algorithms are not iterative. Finally, Barenboim~\cite{Bar15} recently presented a distributed $(\Delta+1)$-coloring algorithm performing in $O(\Delta^{3/4} \log \Delta +\log^*n)$ rounds.

Other forms of coloring problems have also been tackled in the distributed network computing setting, including \emph{relaxations} of the classical vertex coloring problem, such as: edge-coloring, weak-coloring, defective coloring, vertex coloring with more than $(\Delta+1)$ colors, etc. (see, e.g., \cite{BE13} for a survey). In a number of practical scenarios, nodes aiming at breaking symmetry are also subject to more specific individual constraints. This is typically the case in frequency assignments in radio networks~\cite{GPT96,WL05}, in scheduling~\cite{Mar04}, and in digital signal processing~\cite{ZW03}, to mention just a few scenarios. In all these latter settings, each node $u$ is not initially free to choose any value from a color set $\caC$, but is a priori restricted to choose only from some subset $L(u)\subseteq \caC$ of colors. This framework is not captured by classical coloring, but rather by \emph{list-coloring}. As in the case of vertex coloring, distributed list-coloring can be approached from a locality perspective only if the lists satisfy $|L(u)|\geq \deg_G(u)+1$ for every node $u$ of a graph $G$ having degree~$\deg_G(u)$.

Vertex $(\Delta+1)$-coloring, as well as all of its previously mentioned relaxed variants, can be solved in $o(\Delta)+O(\log^*n)$ rounds~\cite{BE13}. However, the more complex task of $(\Delta+1)$-list-coloring was (prior to this work) only known to be  solvable in $\Otilde(|\caC|^{3/4})+O(\log^*n)$~\cite{Bar15} rounds, which is sublinear-in-$\Delta$ only for $|\caC|=o(\Delta^{4/3})$. Moreover, no sublinear (in $\Delta$) algorithms are known for MIS or maximal matching, for which the currently best algorithms run is $O(\Delta)+ \log^* n$ rounds~\cite{BE09,BE13,Kuh09}. (Again, the additional factor $\log^* n$  is unavoidable, and can be seen as an inherent cost of distributed symmetry breaking~\cite{Suo13}). In fact, there is evidence suggesting that no sublinear algorithms exist for these problems. For instance, for maximal matching, a time lower bound of $\Omega(\Delta + \log^* s)$ is known to hold for an anonymous variant of the \LOCAL model in which edges are equipped with locally unique identifiers from the range $\{1,\ldots,s\}$~\cite{Hir12}. In the standard \LOCAL model, a lower bound of $\Omega(\Delta)$ is known to hold for the \emph{fractional} variant of the maximal matching problem~\cite{Goo14}, while an $\Omega(\Delta/\log \Delta + \log^* n)$ lower bound holds for an extension of MIS called \emph{greedy coloring}~\cite{Gav09}.

In order to better understand which tasks can be solved in a number of rounds sublinear in~$\Delta$,  we focus on the general class of \emph{locally checkable labelings} (LCL) introduced by Naor and Stockmeyer~\cite{NS95}, which includes all tasks mentioned so far in this paper. Recall that a LCL is defined as a set of \emph{bad} labeled balls in graphs, where the ball of radius $r\geq 0$ centered at node $u$ in a graph $G$ is the subgraph of $G$ induced by all nodes at distance at most~$r$ from $u$ in $G$ (excluding edges between nodes at distance exactly $r$ from $u$), and where a label is assigned to each node. For instance, the bad balls for coloring are the balls of radius~1 in which the center node has the same label as one of its neighbors. Similarly, the bad balls for MIS are the balls of radius~1 for which either the center of the ball as well as one of its neighbors are both in the MIS, or none of the nodes in the ball are in the MIS. Every ball which is not bad is \emph{good}. To each LCL is associated a distributed task in which all nodes of an unlabeled graph $G$ must collectively compute a label at each node, such that all balls are good. Thus, our general objective is to tackle the following question:

\medskip
\centerline{\emph{What LCL tasks can be deterministically solved in $o(\Delta) + O(\log^* n)$ rounds?} }
\medskip

\noindent Given the state-of-the-art, we know since recently that answer to the above question is affirmative for $(\Delta+1)$-coloring~\cite{Bar15}, and there is also some very partial evidence hinting that this may not be true for MIS-type problems~\cite{Goo14,Gav09}. This also leads us to ask what makes $(\Delta+1)$-coloring and MIS so different? In the study of the randomized \LOCAL model, a separation in time complexity between $(\Delta+1)$-coloring and MIS has very recently been obtained by contrasting the randomized $(\Delta+1)$-coloring algorithms of Harris, Schneider, and Su~\cite{Har16} with lower bounds for MIS due to Kuhn, Moscibroda, and Wattenhoffer~\cite{KMW04}. However, this separation does not carry over directly to the deterministic setting. Here, in an attempt to advance understanding of the question for the deterministic scenario, we put forward the framework of~\emph{conflict coloring}, and show that efficient solutions to problems in the \LOCAL model can be obtained by taking advantage of their amenability to the conflict coloring framework.

\subsection{Our Results}

\paragraph{The setting.}

We define the general \emph{conflict coloring} task, which can be instantiated so as to correspond to any given LCL task. Roughly, conflict coloring is defined by a list of candidate colors given to each node (in the same spirit as list-coloring), and a list of conflicts between colors associated to each edge (following a convention used, e.g., when formulating unique games, CSP-s with binary conflict relations, etc.). For edge $\{u,v\}$, a conflict is a pair of the form $(c_u,c_v)$, indicating that a coloring where  $u$ has color $c_u$ and $v$ has color $c_v$ is illegal. Intuitively, given a LCL, the corresponding instance of conflict coloring is obtained by giving the list of all good balls centered at $u$ to every node $u$, and two balls given to adjacent nodes are in conflict whenever they are not consistent. Every LCL task is therefore a possible instantiation of conflict coloring (a given LCL task may have more than one conflict coloring representation). Note however that the power of conflict coloring extends beyond such a formulation of LCL tasks: depending on the instance, two colors in conflict along an edge $e$ do not, in general, need to be in conflict along another edge~$e'\neq e$.

We will speak of a conflict coloring with lists of length $l$ and conflict degree $d$, or more compactly of $(l,d)$-conflict-coloring, when all color lists given to the nodes are of length at least $l$, and for every edge $e$ and color $c$, the number of colors conflicting with color $c$ on edge $e$ does not exceed $d$. Intuitively, the larger the value of $l$, the easier the problem is, as every node has a choice among a large number of outputs. Conversely, the larger $d$ is, the harder the problem becomes as some nodes have to deal with many conflicts with at least one of their neighbors.

\paragraph{Distributed algorithm.}

Our main result is the design of a generic distributed algorithm which solves the conflict coloring task whenever $l/d>\Delta$ in graphs with maximum degree~$\Delta$. In the classical \LOCAL model for distributed network computing, our algorithm performs in $\Otilde(\sqrt{\Delta})+\log^*n$ rounds in $n$-node graphs, where the $\Otilde$ notation disregards polylogarithmic factors in $\Delta$.

The implications of our result are the following. There exists a trivial representation of $(\Delta+1)$-coloring as a conflict coloring task with $l/d\geq \Delta+1$. Therefore, our algorithm can be used to solve $(\Delta+1)$-coloring in $\Otilde(\sqrt{\Delta})+\log^*n$ rounds, which outperforms the currently fastest known $(\Delta+1)$-coloring algorithm by Barenboim~\cite{Bar15} performing in $\Otilde(\Delta^{3/4})+\log^*n$ rounds. In fact, for most classical variants of coloring, including $(2\Delta-1)$-edge-coloring, $(\Delta+1)$-list-coloring, coloring with forbidden color-distance sets~\cite{Rob91} given a sufficiently large palette, etc., our algorithm solves all these tasks in $\Otilde(\sqrt{\Delta})+\log^*n$ rounds, also improving the best  results known for each of them.  For small values of $\Delta$, our (deterministic)  algorithm for conflict coloring is even faster than the best known \emph{randomized} algorithms for $(\Delta+1)$-coloring~\cite{Har16}.

Interestingly, the bound $l/d>\Delta$  is essentially the best bound for which there exists a generic algorithm solving conflict coloring locally. Indeed, for every $l$ and $d$ such that   $l/d\leq \Delta$, there exists an instance of conflict coloring for which no solutions can be sequentially computed by a greedy algorithm selecting the nodes in arbitrary order. That is, the output of a node can impact the possible legal outputs of far away nodes in the network (like for $\Delta$-coloring~\cite{Lin92}). In particular, we are not aware of any instantiations of conflict coloring for MIS or maximal matching satisfying $l/d >\Delta$, which prevents us from solving these problems with a generic algorithm for conflict coloring. It might well be the case that there are no instantiation of conflict coloring for these problems  satisfying $l/d >\Delta$, which might be another hint that there are no algorithms running in $o(\Delta)+O(\log^*n)$ rounds for these tasks.

\paragraph{The techniques.}

From a technical point of view, the design of our algorithm required the development of a new technique, called a \emph{simplification mechanism}. This mechanism aims at iteratively reducing the difficulty of a given problem until it becomes simple enough to be trivially solved. More specifically, let $P_0$ be the problem we are aiming at solving. Our mechanism constructs a sequence $P_1,\dots,P_t$ of problems with the following three properties: (1) $P_{k+1}$ is ``easier'' to solve than $P_k$, and can be constructed from $P_k$ in $O(1)$ rounds, (2) $P_t$ is simple enough to be solved individually at each node, without any communication, and (3) given a solution to $P_{k+1}$, there is a $O(1)$-round algorithm computing a solution to $P_k$. Conflict coloring is perfectly suited to an application of the aforementioned simplification mechanism. Indeed, the set of colors in $P_0$ are those  in the lists given to the nodes in $G$. Constructing $P_{k+1}$ from $P_k$ increases the size of the lists (which is good), but the number of conflicts between colors also increases (which is bad). However, the increase rate of the number of conflicts will be shown to be lower than the increase rate of the size of the lists, which will eventually ensure that $P_t$ is easily solvable thanks to large lists, but a relatively small number of conflicts.

In conflict coloring, the main difficulty lies in obtaining a $O(\log^* n)$-round algorithm for solving an instance with ratio $l/d \ge 10\oDelta^2\ln \Delta$, given a graph with maximum degree $\Delta$ and edge orientation with maximum outdegree $\oDelta$. Subsequently, the conflict coloring problem then turns out to be directly amenable to an application of the \emph{arbdefective} coloring approach (cf.~\cite{BE10,BE13}), without having to resort to constructions of polynomials of the type used in~\cite{Bar15} during the recombination phase.
This is because the class of conflict coloring problems solved by our algorithm includes \emph{precoloring extension} (i.e., completing a partially given coloring of a graph), which can be handled directly through a modification of color lists available to vertices. By a careful (adaptive) choice of parameters of the arbdefective coloring, the complexity of our algorithm is reduced to $\Otilde(\sqrt \Delta)+ \log^* n$ rounds.

Disregarding polylogarithmic-in-$\Delta$ factors, the $\sqrt \Delta$-running time of our algorithm appears to be the limit of the precoloring extension technique, unless radically new algorithms are found to construct colorings in $O(\log^* n)$ rounds using significantly fewer colors than $\Otilde(\Delta^2)$. This latter problem has resisted all attempts for more than 20 years, since the publication of~\cite{Lin92}.

\paragraph{Additional results.}

Our result has also impact on \emph{centralized local} computation~\cite{ColoringCentralized,MMCentralized,MV13,NO08,MISCentralized}.  In this model, the local computation algorithm (LCA) is executed by a single computing unit which has access to the whole input graph, and needs to answer queries about a solution to the considered problem (e.g., ``is node $u$ in the MIS?''). For answering queries, the LCA probes the input graph, learning in each probe about some node $u$ and its neighborhood. The answers to the queries provided by the LCA must be consistent, that is, there must exist an implicit global solution that fits with the answers of the LCA. The complexity of such an algorithm is the number of probes that the LCA performs per query. Using our algorithm for conflict coloring, we show that there is a deterministic oblivious LCA for solving $(\Delta+1)$-list-coloring (and thus also $(\Delta + 1)$-coloring) using only $\Delta^{O(\sqrt{\Delta} \log^{5/2}\Delta)} \log^*n$ probes, improving the bound in~\cite{ColoringCentralized}.

\subsection{Other Related Work}

In addition to the aforementioned deterministic algorithms for $(\Delta+1)$-coloring, it is worth mentioning the \emph{randomized} algorithms for MIS in~\cite{ABI86,Lub86}, which both perform in $O(\log n)$ rounds, with high probability. Both algorithms can be transformed into randomized $(\Delta+1)$-coloring algorithms with the same round-complexity (e.g., using the reduction in~\cite{Lin92}). A ``direct'' randomized algorithm for $(\Delta+1)$-coloring with the same performances as these latter algorithms can be found in~\cite{BE13}. As a function of $\Delta$ and $n$, the best known randomized algorithms for $(\Delta+1)$-coloring, as well as for $(\Delta+1)$-list-coloring, perform in $O(\sqrt {\log \Delta}) + 2^{O(\sqrt{\log\log n})}$ rounds with high probability~\cite{Har16}. This result, combined with a previous lower bound on MIS of $\Omega(\log \Delta / \log \log \Delta)$ rounds~\cite{KMW04}, which also holds for randomized algorithms, implies a separation between the $(\Delta+1)$-coloring and MIS problems in the randomized case. On the positive side, MIS can be solved in $O(\log^2 \Delta) + 2^{O(\sqrt{\log\log n})}$ rounds with high probability~\cite{BEPS12}. We remark that the randomized and deterministic flavors of the \LOCAL model are significantly different, and in fact admit an exponential time separation, which has been recently shown for specific case of the problem of coloring a tree with~$\Delta$ colors~\cite{Cha16}. Whether a similar separation between randomized and deterministic complexity holds for MIS and the general $(\Delta+1)$-coloring problem is one of the main open questions of the field.

The list-coloring problem was introduced independently by Vizing~\cite{Viz76}, and Erd\"os, et al.~\cite{ERT79}. It is defined as follows. Let $G$ be a graph, let $\caC$ be a set of colors, and let $L:V\to 2^\caC$. If there exists a function $f:V\to \caC$ such that $f(v)\in L(v)$ for every  $v\in V(G)$, and $f(u)\neq f(v)$ for every $\{u,v\}\in E(G)$, then $G$ is said to be $L$-list-colorable. A graph is $k$-\emph{choosable}, or $k$-\emph{list-colorable}, if it has a list-coloring no matter how one assigns a list of $k$ colors to each node. The \emph{choosability number} $\ch(G)$ of a graph $G$ is the least number $k$ such that $G$ is $k$-choosable. Clearly, $\ch(G)\geq \chi(G)$, where $\chi(G)$ denotes the chromatic number of $G$. Computing the choosability number is actually believed to be harder than computing the chromatic number, because deciding the former is $\Pi_2^{\mbox{\sc p}}$-complete, while deciding the latter is NP-complete. In a distributed setting, $(\Delta+1)$-list-coloring is solvable in $O(\Delta+\log^*n)$ rounds by reduction to vertex-coloring. It was also recently proved to be solvable in time dependent on the size of the set of allowed colors, in $\Otilde(|\caC|^{3/4})+O(\log^*n)$ rounds~\cite{Bar15}.

It is also worth specifically mentioning the \emph{weak-coloring} problem, which asks for a coloring of  the nodes such that every non isolated node has at least one neighbor colored differently from it. It was proved in~\cite{NS95} that, in bounded-degree graphs with odd degrees, 2-weak-coloring can be solved in a constant number of rounds. This is one of the rare non-trivial distributed symmetry-breaking tasks that are known to be solvable in a constant number of rounds (in general, it is undecidable whether a solution to a locally checkable task can be constructed in constant time~\cite{NS95}). In graphs with constant maximum degree, for all locally checkable tasks, as well as their probabilistic extension~\cite{FKP13}, any randomized construction algorithm running in a constant number of rounds can be derandomized into a deterministic algorithm running in the same number of rounds~\cite{NS95,FF15}. However, this derandomization  result does not necessarily hold for randomized algorithms running in a \emph{non-constant} numbers of rounds. For example, it is not known whether there exists a deterministic $(\Delta+1)$-coloring algorithm running in a polylogarithmic number of rounds, or in other words, it is not known whether randomization helps for distributed $(\Delta+1)$-coloring.

Many other types of coloring have been considered in the literature, including using a larger number of colors, coloring the edges instead of the nodes, defective coloring, etc., and some of these tasks have been tackled in specific classes of graphs (planar, bounded arboricity, etc.). We refer to~\cite{BE13} for an excellent survey, also describing the various techniques of reductions between coloring, MIS, maximal matching, etc.

Regarding the centralized local model, essentially the same problems as for the distributed \LOCAL model have been studied, such as, e.g., maximal independent set \cite{MISCentralized}, and Maximum Matching \cite{MMCentralized}, for which algorithms were devised. A recent paper  \cite{ColoringCentralized} studies the relationship between the \LOCAL model and the centralized local computation model, including ways to adapt algorithms from the \LOCAL model to the centralized local setting. The resulting LCAs are deterministic and oblivious (they do not require to store information between queries), and, above all, they require a smaller number of probes than previously known algorithms. In particular, the method from  \cite{ColoringCentralized} yields a centralized $\Delta^2$-coloring LCA running in $O(\poly(\Delta) \cdot \log^* n)$ probes per query, and a centralized $(\Delta+1)$-coloring LCA running in $\Delta^{O(\Delta^2)} \cdot \log^* n$ probes per query.

%%%%%%%%%%%%%%%%%%%%%%%%%%%%%%%%%%%%%%%%%%%%%%%%%%%%%
\section{Model, Problem Setting, and Preliminaries}
%%%%%%%%%%%%%%%%%%%%%%%%%%%%%%%%%%%%%%%%%%%%%%%%%%%%%

\subsection{The \LOCAL Model}

We consider the usual framework for the analysis of locality in network computing, namely the \LOCAL model~\cite{Pel00}. In this model, a network is modeled as a \emph{connected} and \emph{simple} $n$-node graph (i.e., no loops, and no multiple edges). Each node $v$ of a network is given an \emph{identity}, denoted by $\id(v)$. This identity is a positive integer that is assumed to be encoded on $O(\log n)$ bits, and the identities of the nodes in the same network are pairwise distinct. In addition, every node $v$ may also be given an \emph{input} $\inp(v)\in\{0,1\}^*$.

For the sake of defining conflict coloring, we assume that the edges incident to a degree-$\delta$ node are identified by pairwise distinct labels in $\{1,\dots,\delta\}$, called \emph{port numbers}. No consistency between the port numbers at different nodes is assumed (in particular, an edge may have two different port numbers at its two extremities). Again, these port numbers are solely used for describing the input to every node in the context of conflict coloring, and provide no additional computing power to the  \LOCAL model (since nodes have identities).

In any execution of an algorithm $\caA$ in the \LOCAL model, all nodes  start at the same time. Initially, every node is only aware of its identity, and its input. As is usual in the framework of network computing, and for simplifying the description of the algorithm, we also assume that each node initially knows a polynomial upper bound on the total number $n$ of nodes. (See~\cite{KSV13} for techniques enabling to get rid of this assumption). Then all nodes perform a sequence of  \emph{synchronous rounds}. At each round, every node sends a message to its neighbors, receives the messages of its neighbors, and performs some individual computation. Which messages to send, and what computation to perform depend on the algorithm~$\caA$. The complexity of algorithm~$\caA$ in $n$-node graphs is the maximum, taken over all $n$-node graphs $G$, of the number of rounds performed by $\caA$ in $G$ until all nodes terminate.

Note that, whenever $t$ is known a priori, an algorithm $\caA$ performing in $t$ rounds can be simulated by an algorithm $\caB$ performing in two phases: First, in a network~$G$, every node $v$  collects all data from nodes at hop distance at most $t$ from $v$ (i.e., their identities, their inputs, as well as the structure of the connections between these nodes); Second, every node simulates the execution of $\caA$ in $B_G(v,t)$, where $B_G(v,t)$ is the \emph{ball} of radius $t$ around node~$v$ in graph $G$, that is, $B_G(v,t)$ is the subgraph of $G$ induced by all nodes at distance at most $t$ from $v$, excluding the edges between the nodes at distance exactly~$t$ from $v$.  Hence, the \LOCAL model enables to measure the \emph{locality} of a problem.

An algorithm satisfying the property that the output of every node is the same for all possible identity assignments to the nodes of the network is called \emph{identity-oblivious}, or \emph{ID-oblivious} for short.

\paragraph{Notation.}

We denote by $\deg_G(v)$ the \emph{degree} of a node $v$ in a graph $G$, that is the number of neighbors of $v$ in $G$, or, alternatively, the number of edges incident to $v$ in $G$ (recall that $G$ is a simple graph). We denote by $\Delta_G=\max_{v\in V(G)}\deg_G(v)$ the maximum degree of the nodes in $G$. The set of neighbors of node $v$ in graph $G$ is denoted by $N_G(v)$. Given an orientation of the edges of $G$, the set of out-neighbors of $v$ (nodes connected to $v$ by edges having their tail at $v$) is denoted by $\vec N_G(v)$, and the maximum node outdegree is denoted by $\oDelta_G$. When the graph $G$ is clear from the context, the index $G$ will be omitted from notation.

\subsection{Conflict Coloring}

Conflict coloring is defined as follows. Let $\caC$ be a finite set, whose elements are called colors. In graph $G$, each node $u\in V(G)$ is given as input
\begin{itemize}
\item a list $L(u)$ of colors in $\caC$, and
\item for every port number $i\in\{1,\dots,\deg_G(u)\}$, a list $C_i(u)=\big ((c_1,c'_1),\dots,(c_k,c'_k)\big)$ of conflicts, where $c_j\in L(u)$ and $c'_j\in\caC$ for every $j=1,\dots,k$.
\end{itemize}
To be well defined, the instance must satisfy the constraint that if $(c,c')\in C_i(u)$ and $u'$ is the neighbor of $u$ reachable from $u$ via port~$i$, then $(c',c)\in C_j(u')$, where $j$ is the port number of edge $\{u,u'\}$ at~$u'$. Each node $u$ in $G$ must output a color $\out(u)\in L(u)$ such that, for every edge $\{u,v\}$ with port number~$i$ at $u$, we have $(\out(u),\out(v))\notin C_i(u)$. That is, two adjacent nodes cannot be colored with a pair of colors that is indicated as a conflict for that edge. A given conflict coloring instance has conflict degree $d$ if, for all colors $c$, there are at most $d$ pairs of the form $(c, \cdot)$ in any of the lists $C_i(u)$. The conflict degree $d$ represents the maximum number of possible conflicts of one colors with other colors of one given neighbor.

For instance, $(\Delta+1)$-coloring is the instance of conflict coloring with $L(u)=\{1,\dots,\Delta+1\}$, and all conflict lists are of the form $(c,c)$ for all $c\in\{1,\dots,\Delta+1\}$. Expressing MIS as an instance of conflict coloring is not as straightforward. One way of doing this is the following. Assign lists of the form $L(u)=\{0,1\}\times\{1,\dots,\Delta\}$ to every node $u$. A color is thus a pair of integer values, where a color in the form of a pair $(1,i)$, for any $i \in \{1,\dots,\Delta\}$, is interpreted as ``$u \in \mbox{MIS}$'', and a color $(0,i)$ is likewise interpreted as ``$u\notin  \mbox{MIS}$, but the neighbor of $u$ reachable via port~$i$ belongs to the MIS''. We set a conflict along the edge from vertex $v$, following the $i$-th port to a neighboring vertex $u$, for all color pairs of the form $(1,j)$ at $v$ and $(1,k)$ at $u$, for all $j, k \in \{1,\dots,\Delta\}$, as well as for all color pairs of the form $(0, i)$ at $v$ and $(0,j)$ at $u$, for all $j \in \{1,\dots,\Delta\}$.

In fact, any LCL task can be expressed as an instance of conflict coloring. To see why, let us revisit MIS, and let us define MIS as an instance of conflict coloring in a brute force manner. One assigns $L(u)=\{S_1,\dots,S_{2^\delta}\}$ to every node $u$ of degree $\delta$, where $S_1$ is the $(\delta+1)$-node star with center labeled~1 and all leaves labeled~0, and, for $j>1$,  $S_j$ is a $(\delta+1)$-node star with center labeled~0, $\delta \choose x$ leaves labeled~1 for some $x\in\{1,\dots,\delta\}$, and all other leaves labeled~0. Conflicts in $C_i(u)$ are between incompatible stars $S_j$ at $u$ and $S'_k$ at $u'$ where the latter is the neighboring node of $u$ reachable from $u$ via port~$i$. More generally, any LCL task can be expressed as an instance of conflict coloring by assigning to every node $u$ a list of colors consisting of all good balls centered at $u$, and conflicts are between inconsistent balls between neighboring nodes.

For the sake of describing our algorithm, we define the \emph{conflict graph} $F$ associated to an instance of conflict coloring on $G=(V,E)$. The conflict graph $F$ is the simple undirected graph with vertex set
\[
V(F) = \{ (v, c) : v \in V(G), c \in L(v)\},
\vspace{-2mm}
\]
and edge set
\[
E(F) = \{\{(u,c), (u',c')\} : \mbox{$(c,c')\in C_i(u)$ where $i=$ port number of  $\{u,u'\} \in E(G)$ at node $u$} \}.
\]
In other words, to every edge $e=\{u,u'\} \in E(G)$  corresponds a bipartite graph with partitions $L(u)$ and $L(u')$, and there is an edge between a color $c\in L(u)$ and a color $c'\in L(u')$ if and only if these two colors are in conflict for edge $e$. For a conflict coloring in a graph of maximum degree $\Delta$, and conflict degree $d$, the conflict graph has degree at most $d \Delta$.

Let us note that, in conflict coloring, there is an interplay between the size, $l$, of the lists of available colors at each node (the larger the better as far as solving the task is concerned), and the conflict degree, $d$, of the colors along each edge.%, i.e., $d$ is the maximum degree of the conflict graph $F$ (the smaller the better).
We define \emph{$(l,d)$-conflict coloring} as conflict coloring with all lists of size $l$, and the degree of the conflict graph is at most~$d$. In the rest of the paper, we shall show that if the ratio between these two quantities is large enough, namely $l/d>\Delta$, then $(l,d)$-conflict coloring in solvable in a sublinear (in $\Delta$) number of rounds. For instance, $(\Delta+1)$-list-coloring corresponds to $l=\Delta+1$, and $d=1$, hence the ratio $l/d$ is sufficient to be covered by our approach. By contrast, for the previously described representation of MIS as conflict coloring, we have $l=2\Delta$ and $d=\Delta$, hence $l/d=2$.\footnote{A simple argument illustrating that $l/d=2$ is essentially the best ratio which can be achieved when using natural conflict-coloring-based representations of MIS is given in Appendix~A.}

%%%%%%%%%%%%%%%%%%%%%%%%%%%%%%%%%%%%%%%%%%%%%%%%%%%%%
% Technical sections
%%%%%%%%%%%%%%%%%%%%%%%%%%%%%%%%%%%%%%%%%%%%%%%%%%%%%

\subsection{Organization and Proof Outline}

In Sections~\ref{sec:deltasquare} and~\ref{sec:main} we provide the techniques and algorithms for solving $(l,d)$-conflict coloring for $l/d > \Delta$. Section~\ref{sec:deltasquare} lays out the main ingredient, namely, a routine for conflict coloring in $O(\log^* n)$ rounds when $l/d \ge 10 \Delta^2 \ln \Delta$ in a graph of maximum degree $\Delta$, or more generally when $l/d \ge 10 \oDelta^2 \ln \Delta$ and an orientation of the edges of the graph with outdegree $\oDelta$ is given. This is achieved by an application of our instance simplification technique, since the existence of color lists in the problem description precludes the application of simpler color reduction mechanisms (e.g., of the sort used by Linial~\cite{Lin92} for $\Delta^2$-vertex coloring). In Section~\ref{sec:main} we then solve any conflict coloring problem with $l/d > \Delta$ by applying the routines from Section~\ref{sec:deltasquare} on specific vertex-disjoint oriented subgraphs of $G$. These subgraphs are carefully constructed using the technique of arbdefective coloring~\cite{Bar15}, in such a way as to have sufficiently small outdegree $\beta$ for the condition $l/d \ge  10 \beta^2 \ln \Delta$ to hold within them. Finally, in Section~\ref{sec:centralized} we discuss implications of our conflict coloring routines for centralized LCAs, both in the case of $l/d \ge 10 \Delta^2 \ln \Delta$ and $l/d > \Delta$.

\section{Instance Simplification}\label{sec:deltasquare}

Our simplification mechanism, which allows us to generate progressively easier conflict coloring problems on a graph $G$, is now captured by the following key lemma. We will apply it to ``simplify'' a $(l,d)$-conflict-coloring problem $P = P_0$, such that $l / d \ge 10 \oDelta^2 \ln \Delta$, into one with a larger ratio $l/d$.

\begin{lemma}
\label{lem:simple}
Let $G$ be a graph with maximum degree $\Delta$ and a given edge orientation of outdegree at most $\oDelta$. Let $P_i$ be an instance of a $(l_i, d_i)$-conflict-coloring problem on graph $G$.
Then, for some integers $l_{i+1}, d_{i+1}$, there exists an instance $P_{i+1}$ of $(l_{i+1}, d_{i+1})$-conflict-coloring on graph $G$, such that:
\begin{enumerate}
\item There exists an ID-oblivious single-round local distributed algorithm which, given the \emph{input} of each node in $P_i$, outputs for each node its input in $P_{i+1}$.
\item There exists an ID-oblivious single-round local distributed algorithm which, given any valid \emph{output} of each node in $P_{i+1}$, outputs for each node a valid output for $P_{i}$.
\item The following condition is fulfilled for any $\eps>0$, when $\oDelta$ is larger than a sufficiently large constant:
$$\frac {l_{i+1}}{d_{i+1}} > \frac 1 {\Delta} \exp\left(\frac{1}{(e^2 + \eps)\oDelta^2}\frac{l_i}{d_i}\right).$$
\end{enumerate}
\end{lemma}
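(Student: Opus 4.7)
The plan is to construct $P_{i+1}$ by ``aggregating'' old colors into richer combinatorial objects, so that the new list size grows (ideally exponentially) in $l_i$ while the effective conflict degree grows much more slowly. A natural candidate is the \emph{subset construction}: each new color at a node $u$ is an $s$-subset $S \subseteq L_i(u)$ for a parameter $s$ to be optimized, and two new colors $S$ at $u$ and $S'$ at $v$ are declared to conflict on an oriented edge $(u,v)$ of port $p$ iff every pair $(c,c') \in S \times S'$ satisfies $(c,c') \in C_p(u)$ (so-called \emph{total conflict}).

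For property (1), the construction of $L_{i+1}(u)$ and $C_p^{i+1}(u)$ depends only on local data, and the single round of communication needed to share neighbor-side conflict lists uses no identities. For property (2), given valid $P_{i+1}$-outputs $\{S_u\}$, each $u$ exchanges $S_u$ with neighbors and applies an ID-oblivious selection rule to pick a single $c_u \in S_u$. The key point is that the ``not totally conflicting'' $P_{i+1}$ condition guarantees the existence of a compatible pair in $S_u \times S_v$ for every edge, and a careful symmetric local rule (e.g.\ ``$u$ outputs the lex-smallest color in $S_u$ that has a compatible partner in every neighbor's subset'', with an orientation-based tiebreaker) is intended to extract compatible $c_u$'s simultaneously across all edges.

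The ratio calculation is then combinatorial: $l_{i+1} = \binom{l_i}{s}$, and the number of subsets $S'$ totally conflicting with a fixed $S$ is bounded by $s \cdot \binom{d_i}{s}$, since such an $S'$ must lie in the intersection $\bigcap_{c \in S}\{c' : (c,c') \in C_p(u)\}$ of size at most $d_i$. Using $\binom{l_i}{s} \ge (l_i/s)^s$ and $\binom{d_i}{s} \le (e\,d_i/s)^s$, and optimizing $s \approx l_i/(e\,d_i\,\oDelta)$, pushes $l_{i+1}/d_{i+1}$ into an exponential regime. To match the exponent $l_i/((e^2+\eps)\oDelta^2 d_i)$, one further leverages the $\oDelta$-fold out-neighbor structure (restricting the relevant conflict geometry and contributing an extra $\oDelta$ in the denominator of the exponent via a second application of $\binom{n}{k}\le (en/k)^k$), while the $1/\Delta$ prefactor absorbs the loss from the one-round decoding when a node must be simultaneously consistent across all $\Delta$ undirected neighbors.

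The main obstacle is designing and analyzing the one-round decoding rule in property~(2). The ``total conflict'' definition is what drives the exponential ratio improvement, but it only guarantees the \emph{existence} of some compatible pair $(c_u,c_v) \in S_u \times S_v$ per edge; one must argue that a symmetric, ID-oblivious, single-round rule nonetheless extracts globally-consistent $c_u$'s. I expect this to be handled by exploiting the given edge orientation of outdegree $\oDelta$ to assign asymmetric roles to the endpoints of each edge, which both unlocks the consistent decoding and explains why $\oDelta$, rather than $\Delta$, appears in the dominant term of the final bound.
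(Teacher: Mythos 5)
Your construction gets the broad shape right (aggregate old colors into $s$-subsets, bound the number of bad subsets, optimize $s \approx l_i/(e^2 d_i \oDelta)$), but the choice of conflict relation is the critical design decision, and ``total conflict'' is the wrong one. With total conflict, the guarantee you get from a valid $P_{i+1}$-solution is extremely weak: for each edge $\{u,v\}$ there merely \emph{exists} some compatible pair $(c_u,c_v)\in S_u\times S_v$. This does not imply the existence of a simultaneous choice of one color per node, let alone one computable in a single round. Concretely, take a triangle $u,v,w$ with $S_u=S_v=S_w=\{1,2\}$ and all conflict lists equal to $\{(1,1),(2,2)\}$: no two of these sets totally conflict (so this is a valid $P_{i+1}$-output), yet extracting a single compatible color per vertex is exactly proper $2$-coloring $K_3$, which has no solution. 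Your hope that an orientation-based tiebreaker saves the day does not work here, because total conflict gives no quantitative control over \emph{how many} colors in $S_u$ are blocked by a given out-neighbor.

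The paper sidesteps this by using a \emph{threshold} conflict rather than a total one: $(u,C_u)\sim(v,C_v)$ iff more than $\tau_i := \lfloor k_i/\oDelta\rfloor -1$ elements of one set conflict with the other set (in either direction). Then ``not conflicting'' yields a strong quantitative guarantee --- each out-neighbor can invalidate at most $\tau_i$ elements of $\out_{i+1}(v)$ --- so with $|\out_{i+1}(v)|=k_i \ge \oDelta\tau_i+1$ and the given orientation, a greedy pick over out-neighbors always succeeds in one round, establishing Clause~2. The price is that the new conflict degree $d_{i+1}$ is no longer tiny (it is roughly $\binom{k_i d_i}{\tau_i}\binom{l_i}{k_i-\tau_i}$), and the paper additionally prunes the lists $L_{i+1}(v)$ via a Markov-type argument to keep the conflict degree bounded while losing only half the $k_i$-subsets. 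This is also why the paper's exponent is $l_i/((e^2+\eps)\oDelta^2 d_i)$ and not something larger: the threshold structure caps the achievable ratio gain, whereas your total-conflict estimate overshoots precisely because it ignores the decoding constraint. In short, your Clause~2 is the missing idea, and fixing it forces a different conflict relation and a different $d_{i+1}$ calculation.
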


\smallskip
\noindent\emph{(For improved readability, the proof of the Lemma is postponed to Appendix~B at the end of the paper.)}

\smallskip
The construction used in the proof of Lemma~\ref{lem:simple} is the most technically involved part of our paper. Since the values of both $l$ and $d$ change in the steps of simplification mechanism, we inherently exploit the properties of conflict coloring, displaying that for our purposes, the class of conflict coloring problems needs to be addressed in its full generality of formulation. Indeed, even if the original problem $P_0$ is chosen as a relatively simple task, such as a list coloring problem (with $d=1$), all the subsequent problems $P_i$, $i\geq1$, which appear later on in the scheme, are of more general conflict form (with $d>1$).

The following lemma provides a criterion which allows us to determine the necessary number of iterations of the proposed simplification mechanism. It states that we can solve a $(l,d)$-conflict-coloring problem directly, without any further communication, given that the ratio $l/d$ is sufficiently large, subject to some additional assumptions constraining the structure of the input instance.
This is achieved through a greedy assignment of colors for the sufficiently simplified problem instance.

\begin{lemma}
\label{lem:greedy}
Consider an instance of the $(l,d)$-conflict coloring problem on a graph $G$ with maximum degree $\Delta$, such that the list of colors available to all nodes is $\{1,\ldots,l\}$. Suppose the following information available to all nodes:
\begin{itemize}
  \item Each node $v \in V$ receives its input $\inp(v)$ for the corresponding $(l,d)$-conflict coloring instance $P$ for $v$, accompanied by an integer label $\lambda(v) \in \{1,\ldots, s\}$, such that $\lambda(V)$ is a $s$-vertex-coloring  of the graph (i.e., $\lambda(u) \neq \lambda(v)$ for all $\{u,v\} \in E(G)$),
  \item A promise is given to all nodes $v\in V$ that $\inp(v) \in I$, where $I$ is a set known to all nodes.
\end{itemize}
If $\frac{l}{d} > \Delta s |I|$, then a solution to $P$ can be found in a local manner without communication (in $0$ rounds).
\end{lemma}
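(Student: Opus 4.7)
The plan is to have every node deterministically construct, from the commonly known parameters $(\Delta, s, I, l, d)$ and the shape of the conflict-coloring instance, the same function $f \colon \{1,\ldots,s\} \times I \to \{1,\ldots,l\}$, and then let each node $v$ output $\out(v) := f(\lambda(v), \inp(v))$. Since $f$ depends only on information common to all nodes and is produced by a fixed deterministic procedure, every node computes the same $f$ in $0$ rounds with no communication.

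To build $f$, I would process the pairs $(\lambda, \iota) \in \{1,\ldots,s\} \times I$ in a canonical order (e.g.\ lexicographic). When processing $(\lambda, \iota)$, I assign $f(\lambda, \iota)$ to be the smallest color $c \in \{1,\ldots,l\}$ that is not forbidden by any previously processed pair $(\lambda', \iota')$ with $\lambda' \neq \lambda$ (pairs with $\lambda' = \lambda$ impose no constraint, since two nodes of $G$ sharing the same $\lambda$-value are non-adjacent). The set forbidden by such a processed $(\lambda', \iota')$ with assigned color $c' := f(\lambda', \iota')$ is $B(\iota', c') := \bigcup_{i} \{x : (c', x) \in C_i(\iota')\}$, i.e.\ the colors that would produce a conflict with $c'$ at some port of a hypothetical node of input $\iota'$. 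Each such set has cardinality at most $d \cdot \Delta$ (at most $d$ entries per port, across at most $\Delta$ ports), and fewer than $s|I|$ pairs have been processed, so the total forbidden set has size strictly below $s|I| \cdot d\Delta < l$, where the last inequality uses the hypothesis $l/d > \Delta s|I|$. Hence a valid $c$ always exists.

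For correctness, consider any edge $\{u,v\} \in E(G)$. Because $\lambda$ is a proper $s$-coloring, $\lambda(u) \neq \lambda(v)$, so the pairs $(\lambda(u),\inp(u))$ and $(\lambda(v),\inp(v))$ are distinct. Assume WLOG the former is processed first and assigned color $c_u$; then when processing the latter, the chosen $c_v$ lies outside $B(\inp(u), c_u)$, which means $(c_u, c_v) \notin C_i(u)$ for every port $i$ of $u$, including in particular the port carrying the edge $\{u,v\}$ at $u$. The only delicate point in the argument is the counting: it is the $\Delta$ factor arising from summing port-wise conflict bounds across the (at most $\Delta$) ports of a node that forces the hypothesis $l/d > \Delta s|I|$ rather than a weaker $l/d > s|I|$.
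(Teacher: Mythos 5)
Your proof is correct and follows essentially the same route as the paper: both constructions build a single deterministic function on the common input space $\{1,\ldots,s\}\times I$ by processing its elements greedily in a fixed order, each time discarding at most $d\Delta$ colors per already-processed element, and invoking the hypothesis $l > d\Delta s|I|$ to guarantee a surviving color. The only cosmetic difference is the direction from which the forbidden set is computed: the paper looks at the conflict lists of the pair currently being assigned (its $S_{i,\sigma}(\gamma)$), whereas you look at the conflict lists of the previously processed pairs (your $B(\iota',c')$); these are interchangeable by the symmetry constraint $(c,c')\in C_i(u) \Leftrightarrow (c',c)\in C_j(u')$ built into the problem definition, and your additional step of skipping pairs with the same label is a harmless optimization that the paper does not bother with.
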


The proof of the lemma relies on the observation that all nodes can use their shared knowledge of set $I$ to determine an assignment of non-conflicting colors to each possible input from $I$, without communication.

\begin{proof}
Let $I' = I \times \{1,\ldots,s\}$. We construct the $(l,d)$-conflict-coloring as a simple function $c : I' \to \{1,\ldots,l\}$, where the color $\out(v)$ of a node $v$ with input $\inp(v)$ and label $\lambda(v)$ is given as $\out(v) = c (\inp'(v))$, where $\inp'(v) = (\inp(v),\lambda(v))$. The function $c$ is decided locally, by each node in an identical way, based only on knowledge of $I'$. To do this, we consider a fixed enumeration $I' = (\inp'_1, \inp'_2, \ldots, \inp'_{|I'|})$ of set $I'$. For $i \in \{1, \ldots, |I'|\}$, $\sigma \in \{1,\ldots, \Delta\}$, $\gamma \in \{1, \ldots, l\}$, let $S_{i,\sigma}(\gamma) \subseteq \{1,\ldots,l\}$ be the set of colors defined in $\inp'_i$ as forbidden  in a solution to $P$ for a vertex initialized with $\inp'_i$, given that the $\sigma$-th neighbor of this vertex obtains color $\gamma$. Notice that, by the conflict degree constraint for problem $P$, we have $|S_{i,\sigma}(\gamma)| \leq d$, for all $i, \sigma, \gamma$.
We now define function $c$ over input set $I'$ sequentially and greedily, fixing for successive $i = 1,\ldots, |I'|$ the value $c(\inp'_i)$ as the first (smallest) color value which can be assigned to a vertex having input $\inp'_i$ without causing a conflict with any potentially neighboring vertex which has already been colored, i.e., which has input $\inp'_j$, for some $j<i$:
\begin{equation}
\label{eq:iset}
c(\inp'_i) := \min \left(\{1, \ldots, l\} \setminus \bigcup_{j <i, 1\leq \sigma \leq \Delta} S_{i,\sigma}(c(\inp'_j))\right)
\end{equation}
Since each of the $|I'|$ possible input configurations conflicts with at most $d$ colors of its neighbor, for each of its $\Delta$ possible placements, and $l > d \Delta |I'| = d\Delta s |I|$ by assumption, it follows that using the rule~\eqref{eq:iset} we can assign a color for all feasible inputs and labels of nodes without running out of colors. We also remark that, for all $\{u,v\}\in E(G)$, we have $\lambda(u)\neq \lambda(v)$ by assumption, hence $\inp'(u) = (\inp(u),\lambda(u)) \neq (\inp(v), \lambda(v)) = \inp'(v)$. The correctness of the obtained conflict coloring $\out (v) = c(\inp'(v))$ now follows directly from the definition of function $c$ (cf.~Eq.~\eqref{eq:iset}).
\end{proof}

We can now combine the claims of Lemma~\ref{lem:simple} and Lemma~\ref{lem:greedy} to show that any conflict coloring problem $P_0$, given a sufficiently large initial ratio $l_0/d_0$, will after a small number of rounds be simplified by iterated application of Lemma~\ref{lem:simple} into a conflict coloring problem $P_t$, which is solvable without communication in view of Lemma~\ref{lem:greedy}. This leads us to the main technical lemma of this Section.

\begin{lemma}\label{lem:techlemma}
For a graph $G$ with maximum degree $\Delta$, a $s$-coloring of the vertex set, and a given orientation of edges with maximum outdegree $\oDelta$, where $\oDelta$ is at least a sufficiently large constant, any instance of the $(l,d)$-conflict coloring problem with $\frac{l}{d} \geq 10 \oDelta^2 \ln \Delta$ can be solved with a local distributed algorithm in at most $3 (\log^* \max\{s, l, \Delta\} - \log^* \frac{l}{d}) + 10$ rounds. In particular, the number of rounds of the algorithm can be written as $O(\log^* s + \log^* \Delta + \log^* d)$, where to obtain this bound we restrict excessively long color lists, so that $l_0 = d_0 \lceil 10 \Delta^2 \ln \Delta\rceil$.
\end{lemma}

\begin{proof}
To allow for a more compact write-up, we do not optimize the exact values of constants in the analysis. (In fact, the condition of the lemma can also be strengthened to $\frac{l}{d} \geq (e^2+2+ \eps') \oDelta^2 \ln \Delta$, for any $\eps'>0$, where $e^2 + 2 \approx 9.39$.)

Throughout the proof, we will assume that $\oDelta$ is sufficiently large that Clause~3 of Lemma~\ref{lem:simple} holds for the considered $(l,d)$-coloring problem with parameter $\eps = 0.1$. %Indeed, if $\oDelta = O(1)$, we can construct a solution to the considered conflict coloring problem directly in $O(\log^* s) + O(\oDelta^2) = O(\log^* s)$ rounds, by first obtaining an $O(\oDelta^2)$-coloring of the graph from its initial $s$-coloring using a variant of Linial's algorithm in $O(\log^* s)$ rounds (a procedure referred to as Arb-Linial, cf.~\cite{BE08}), and then greedily assigning in each round colors to all vertices of successive independent sets, corresponding to color classes of the given $O(\oDelta^2)$-coloring of $G$.

Now, let $P_0$ be the initially considered $(l,d)$-coloring problem ($l_0 = l$, $d_0 = d$). By iterating the simplification procedure from Lemma~\ref{lem:simple} in successive rounds, we obtain a sequence of problems $P_i$ with a rapidly increasing ratio $\frac{l_i}{d_i}$. Indeed, by applying Lemma~\ref{lem:simple} with $\eps = 0.1$, we have for sufficiently large $\oDelta$:
\begin{align*}
\frac {l_1}{d_1} &> \exp\left(\frac{1}{(e^2 + 0.1)\oDelta^2}10 \oDelta^2 \ln \Delta - \ln \Delta\right)\\  &> \exp\left(2.51\ln \Delta\right) > \Delta^{2.5} \ln \Delta \geq \oDelta^{2.5} \ln \Delta.
\end{align*}
Moreover, whenever $\frac {l_i}{d_i}\geq \oDelta^{2.5} \ln \Delta$, we have: $\oDelta \leq \left(\frac {l_i}{d_i \log \Delta}\right)^{2/5}$, hence $\frac{l_i}{d_i\oDelta^2} \geq \left(\frac{l_i}{d_i}\right)^{1/5} \log \Delta^{4/5}$. We obtain for sufficiently large $\oDelta$:
\begin{align*}
\frac {l_{i+1}}{d_{i+1}} &> \exp\left(\frac{1}{(e^2 + 0.1)\oDelta^2}\frac{l_i}{d_i} - \ln \Delta\right) \\ &> \exp\left(\frac{1}{(e^2 + 0.1)\oDelta^2}\frac{l_i}{d_i} - \frac 1 {\oDelta^{2.5}}\frac{l_i}{d_i} \right)\\ &> \exp\left(\left(\frac{l_i}{d_i}\right)^{0.2}\right).
\end{align*}
By an application of the above inequality over two successive steps, it follows that for all $i\geq 1$ the following condition:
$$
\frac {l_{i+2}}{d_{i+2}} > \exp\left(\frac {l_{i}}{d_{i}}\right),
$$
holds when $\oDelta$ is sufficiently large (we require $l_1 / d_1 > x$ to hold, where $x$ is the solution to the equality $\exp[x^{0.2}] = x^5$; we have $x \approx 2.45 \cdot 10^{10}$, and we recall that $l_1 / d_1 > \oDelta^{2.5}$). Thus, we have:
\begin{equation}
\label{eq:star}
\log^*\frac{l_t}{d_t} - \log^*\frac{l_0}{d_0} \geq \frac{t}{2}-1,\quad \text{for all $t\geq 0$}.
\end{equation}
We will now focus on finding a value of $t$ such that Lemma~\ref{lem:greedy} can be applied to problem $P_t$.

In order to bound the size of the set $I$ of feasible inputs for problems $P_t$ in our sequence, we will assume that the initial $(l_0,d_0)$-coloring problem $P_0$ is presented in \emph{standard interval form}, i.e., so that the color lists of each vertex $v\in V$ are identified with the set of consecutive integers, $L(v) = \{1,\ldots,l_0\}$. Should the initial color lists be of different form, a relabeling of colors by all nodes to obtain standard interval form can be performed in one computational round, preserving the conflict graph $F_0$ up to isomorphism. Then, for a node $v\in V$, its input in $P_0$ consists of a subset of forbidden color pairs from $\{1,\ldots,l\}^2$, assigned to each of the ports incident to $v$. By considering all possible input configurations, for given $l_0$ and $\Delta$ we define a set $I_0$ of feasible input configurations of problem $P_0$, obtaining:
$$
|I_0| \leq 2^{\Delta l_0^2}.
$$
Set $I_0$ can be computed locally (without communication) by all nodes of the graph.

By iteratively applying Clause~1 of Lemma~\ref{lem:simple}, we obtain that the input for a node $v$ in problem $P_{t}$ can be constructed by a $t$-round distributed ID-oblivious algorithm, using only the inputs of nodes in problem $P_0$ within a radius-$t$ ball around $v$ in graph $G$. Given $l_0$ and $\Delta$, by considering all possible topologies of a radius-$t$ ball of the graph and considering all possible inputs of $P_0$ for nodes within this ball, each node can compute without communication a set $I_t$ of feasible problem inputs for problem $P_t$. Since a radius-$t$ ball in $G$ contains fewer than $2\Delta^t$ nodes, we obtain a rough bound on the size of set $I_t$:
\begin{equation}
\label{eq:it}
|I_t| <  |I_0|^{2 \Delta^t} < 2^{2 l_0^2 \Delta^{t+1}}
\end{equation}
Now we find a value of $t$ for which the assumption $\frac{l_t}{d_t} > \Delta s |I_t|$ of Lemma~\ref{lem:greedy} is met. Taking into account Eq.~\eqref{eq:star}, it suffices to choose any value of $t$ which satisfies:
\begin{equation}
\label{eq:t4}
t \geq 2 \log^* (\Delta s |I_t|) - 2 \log^* \frac{l_0}{d_0} + 1.
\end{equation}
Moreover, in view of Eq.~\eqref{eq:it}, we have:
\begin{equation}
\label{eq:t5}
2\log^* (\Delta s |I_t|) \leq 2\log^*\left(\Delta s (2 l_0^2 \Delta^{t+1})\right) \leq 2\log^* \max\{s, l_0, \Delta\} + 2\log^* t + 8.
\end{equation}
Taking into account~\eqref{eq:t5}, by a very rough bound, condition~\eqref{eq:t4} is thus fulfilled for a suitably chosen value $t = 3 (\log^* \max\{s, l_0, \Delta\} - \log^* \frac{l_0}{d_0}) + 10$. In particular, we have $t = O(\log^* s + \log^* \Delta + \log^* d_0)$. For this value $t$, we can solve problem $P_t$ in zero rounds by Lemma~\ref{lem:greedy} as long as it is represented in standard interval form (with color lists $\{1,\ldots,l_t\}$ for each vertex); obtaining such a formulation requires one communication round.

Overall, we obtain an algorithm for solving the $(l_0,d_0)$-conflict-coloring instance $P_0$ in $O(t)$ rounds, by constructing an instance of $P_t$ from $P_0$ in $t$ rounds through $t$-fold application of Lemma~\ref{lem:simple}, solving problem $P_t$ in its standard interval form using Lemma~\ref{lem:greedy}, and eventually obtaining a solution to the original instance $P_0$ after a further $t$ rounds again in view of Lemma~\ref{lem:simple}. This completes the proof of the Lemma.
\end{proof}

Lemma~\ref{lem:techlemma} can be applied to solve $(l,d)$-conflict coloring on any graph $G$, using a $O(\Delta^2)$ initial coloring, and an arbitrary orientation of its edges. This coloring is computed in $\log^* n + O(1)$ rounds using Linial's algorithm \cite{Lin92}. We thus obtain the following theorem. (We note that we put $\oDelta = \Delta$ in the claim of Lemma~\ref{lem:techlemma}, whose claim holds if $\Delta$ is at least a sufficiently large constant. The case of $\Delta=O(1)$ can be handled separately, by first obtaining a $O(\Delta^2)$-coloring of the graph using Linial's algorithm in $O(\log^* n)$ rounds, and then solving the conflict coloring problem in a further $O(\Delta^2) = O(1)$ rounds by greedily assigning in each round colors to all vertices of successive independent sets, corresponding to color classes of the given $O(\Delta^2)$-coloring of $G$.)

\begin{theorem}\label{thm:largeconflict}
There is a local distributed algorithm which solves the $(l,d)$-conflict-coloring problem in $O(\log^* d+ \log^*\Delta ) + \log^* n$ rounds when $\frac{l}{d} \ge 10 \Delta^2 \ln \Delta$.\qed
\end{theorem}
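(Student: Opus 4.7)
The plan is to derive this theorem as a direct corollary of Lemma~\ref{lem:techlemma}, simply by producing, essentially for free, the two pieces of input data that Lemma~\ref{lem:techlemma} requires but that are not explicitly provided in the \LOCAL{} model: a vertex coloring of $G$ with a bounded palette $s$, and an edge orientation of bounded maximum outdegree $\oDelta$. Both can be obtained in $0$ rounds from what every node already knows at the start.

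First, for the vertex coloring, I would use the identifiers themselves. Since every node $v$ is initialized with a unique integer $\id(v)$ encoded on $O(\log n)$ bits, the assignment $\lambda(v) := \id(v)$ is a proper vertex coloring of $G$ with $s = n^{O(1)}$ colors; in particular $\log^* s = \log^* n + O(1)$. Note that no communication is needed, and the coloring is well defined since identifiers are pairwise distinct. Second, for the edge orientation, I would orient each edge $\{u,v\}$ from its lower-$\id$ endpoint to its higher-$\id$ endpoint. Every node can decide the orientation of its incident edges after at most one round of exchanging identifiers with its neighbors; in fact, since the orientation is a local decision once neighbor IDs are known, any consistent tie-breaking rule yields a valid orientation. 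The resulting outdegree satisfies $\oDelta \leq \Delta$, so the assumption $l/d \geq 10 \Delta^2 \ln \Delta$ of the theorem immediately implies $l/d \geq 10 \oDelta^2 \ln \Delta$, which is what Lemma~\ref{lem:techlemma} needs.

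With both the $s$-coloring and the orientation available, I would now invoke Lemma~\ref{lem:techlemma} on the given $(l,d)$-conflict coloring instance. The lemma then delivers a solution in $O(\log^* s + \log^* \Delta + \log^* d)$ rounds. Since $\Delta \leq n$ and $s = n^{O(1)}$, we have $\log^* \Delta \leq \log^* n + O(1)$ and $\log^* s = \log^* n + O(1)$, so the running time collapses to $O(\log^* n + \log^* d)$, as claimed. The initial round of identifier exchange used to construct the orientation contributes only an additive $O(1)$ and is absorbed into this bound.

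There is essentially no technical obstacle here: all the heavy lifting (the simplification mechanism and the greedy base case) is already encapsulated in Lemma~\ref{lem:techlemma}. The only point I would be careful about is making sure the bookkeeping is consistent: that the orientation is obtained in $O(1)$ rounds, that $s$ coming from identifiers is genuinely polynomial in $n$ so that $\log^* s = \log^* n + O(1)$, and that the edge orientation really does satisfy $\oDelta \leq \Delta$ so that the ratio hypothesis transfers without loss. Once these routine observations are in place, the theorem follows immediately.
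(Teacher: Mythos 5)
Your proposal is correct and matches the paper's own derivation: the paper likewise obtains the theorem by invoking Lemma~\ref{lem:techlemma} with the node identifiers serving as the initial $s$-coloring, $s = O(\poly(n))$, so that $\log^* s = O(\log^* n)$ and $\log^*\Delta \leq \log^* n$. Your extra remark that an orientation of outdegree $\oDelta \leq \Delta$ (e.g., by identifier comparison) is available in $O(1)$ rounds is exactly the routine bookkeeping the paper leaves implicit.
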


For example, for the special case of list coloring, this gives the following corollary.

\begin{corollary}~\label{cor:listdeltasq}
There is a local distributed algorithm which finds a $(10 \Delta^2 \ln \Delta)$-list-coloring in $\log^* n + O(\log^* \Delta)$ rounds.
\end{corollary}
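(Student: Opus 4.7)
The plan is to derive the corollary as a direct specialization of the preceding theorem, by interpreting list-coloring as a conflict-coloring instance with very low conflict degree. I would first observe that an instance of $k$-list-coloring, with $k = 10\Delta^2 \ln \Delta$, can be cast as a $(l,d)$-conflict-coloring instance in the natural way: for each node $u$, the list $L(u)$ of available colors (of size at least $l = 10\Delta^2 \ln \Delta$) becomes the conflict-coloring list at $u$, and for each edge $\{u,v\}$ with port $i$ at $u$, the conflict list $C_i(u)$ consists of all pairs $(c,c)$ with $c \in L(u) \cap L(v)$.

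Next I would verify the parameters. Since each color $c \in L(u)$ conflicts along a given edge with at most a single color at the neighbor, namely $c$ itself, the maximum degree of the resulting conflict graph $F$ is $d = 1$. The list size is $l \geq 10 \Delta^2 \ln \Delta$, so $l/d \geq 10 \Delta^2 \ln \Delta$, which is precisely the hypothesis of the theorem.

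Applying the theorem then yields an algorithm running in $O(\log^* n + \log^* d) = O(\log^* n + \log^* 1) = O(\log^* n)$ rounds, producing for each node $u$ an output $\out(u) \in L(u)$ with the property that $\out(u) \neq \out(v)$ for every edge $\{u,v\}$, which is exactly a valid list-coloring.

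There is really no main obstacle here; the only mild subtlety is confirming that the list-coloring constraint yields $d = 1$ (rather than, say, $d = l$ as in the MIS reformulation from Section~2), which is immediate from the form of the forbidden pairs $(c,c)$. The corollary then follows with no further work.
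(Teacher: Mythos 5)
Your proof is correct and matches the paper's implicit derivation: list-coloring is the conflict-coloring instance with conflict lists of the form $(c,c)$, giving $d=1$, so the ratio $l/d = 10\Delta^2 \ln\Delta$ satisfies the theorem's hypothesis and the $\log^* d$ term vanishes. The paper states this corollary without further argument precisely because it is this immediate specialization.
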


%We remark that the round complexity of the conflict coloring algorithms is independent of $n$ when a better initial vertex coloring of the graph is given. For example, given a $\Delta^2$-coloring of the graph (which can be obtained using Linial's algorithm~\cite{Lin92} in $\log^* n$ rounds), a $(10\Delta^2 \ln \Delta)$-list-coloring can be found in $O(\log^* \Delta)$ rounds.

In the next section, we will use Theorem~\ref{thm:largeconflict} as a building block for solving conflict coloring instances with a smaller value of ratio $l/d$.

\section{Conflict Coloring with a Small Number of Colors}\label{sec:main}

In this section we show how to apply the techniques from Section~\ref{sec:deltasquare} to obtain a distributed solution to $(l,d)$-conflict coloring problems with $l \geq d\cdot \Delta +1$, such as $(\Delta+1)$-list-coloring.

Whereas we choose to speak of conflict colorings throughout the rest of the paper, we will no longer make use of the general structure of conflict colorings in our technical arguments. The reader focusing on results directly relevant to the $(\Delta+1)$-coloring problem may from now on assume that the problem being solved is $(\Delta+1)$-list-coloring (and specifically, that the conflict degree is $d=1$), and in this context, may rely on Corollary~\ref{cor:listdeltasq} instead of Theorem~\ref{thm:largeconflict} as the relevant ingredient used in the subsequent construction.

In the designed algorithm we will also make use of the following recent result on arbdefective coloring, shown by Barenboim~\cite{Bar15}. For $\beta\geq 0$, a (possibly improper) vertex coloring of a graph $G$ is said to be \emph{$\beta$-arbdefective} if there is an orientation of the edges of $G$ such that, for every node $v$, at most $\beta$ out-neighbors of $v$ have the same color as~$v$.

\begin{lemma}[\cite{Bar15}]\label{lem:arbdefective}
There is a distributed algorithm, parameterized by $k\geq 1$, which, given any graph $G$ with a $\Delta^2$-coloring of its vertex set, produces
for $\beta = O(\frac \Delta k \log \Delta)$ a $\beta$-arbdefective $k$-coloring $V = V_1 \cup \ldots \cup V_k$ of $G$, together with a corresponding orientation of each $G[V_i]$ having outdegree at most $\beta$. The running time of the algorithm is $O(k \log \Delta)$ rounds.
\end{lemma}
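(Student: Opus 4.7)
The plan is to build the $k$ color classes $V_1, V_2, \ldots, V_k$ one at a time, peeling off one class per phase of $O(\log\Delta)$ rounds, for a total of $O(k\log\Delta)$ rounds. At the start of phase $i$, we work within the residual subgraph $G_i := G[V\setminus (V_1\cup\cdots\cup V_{i-1})]$ and select $V_i\subseteq V(G_i)$ together with an orientation of $G_i[V_i]$ whose outdegree is at most $\beta = O((\Delta/k)\log\Delta)$. The crucial design invariant is that every vertex of $G_i$ whose residual degree has dropped below some threshold $T = \Theta(\beta)$ is forced to join $V_i$; this monotone progress guarantee is what makes $k$ phases suffice to cover every vertex.

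To implement a single phase in $O(\log\Delta)$ rounds, I would use the $\Delta^2$-coloring of the vertex set to derandomize a greedy sequential procedure. Each candidate vertex $v\in V(G_i)$ tentatively commits to joining $V_i$, and edges inside the tentative $V_i$ are oriented deterministically toward the endpoint with the larger $\Delta^2$-color index (this fixes the orientation without further communication). A vertex whose resulting outdegree would exceed $\beta$ withdraws from $V_i$ for this phase; the withdrawals are then propagated. The $\Delta^2$-coloring lets successive color classes make progress in a single round each, so after $O(\log\Delta)$ rounds of pruning the local situation stabilizes. In a refined version, one could use a $O(\log\Delta)$-depth recursion in which each level halves the relevant "excess degree," explaining the $\log\Delta$ factor in both the per-phase round count and the arbdefect.

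For the global analysis, the key invariant to propagate is that any vertex $v$ not yet colored after phase $i$ must have lost $\Omega(\beta)$ neighbors to previously-built classes, since otherwise $v$ would have qualified for membership in some $V_j$ with $j\le i$ under the threshold rule. Summing this over $k$ phases gives $k\cdot\Omega(\beta)\ge \Delta$, which is consistent with the target $\beta=O((\Delta/k)\log\Delta)$. The arbdefect bound on the final partition is immediate from the per-phase outdegree budget, and the orientations of $G[V_i]$ required by the lemma are precisely those constructed during each phase's pruning.

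The main obstacle is simultaneously controlling the two competing quantities, the number of color classes $k$ and the arbdefect $\beta$, so that their product exceeds $\Delta$ by only a $\log\Delta$ factor. A naive peeling process with a single pass per class would typically require $\beta$ to be $\Omega(\Delta/k)$ \emph{plus} additional slack for vertices that are "unlucky" in their orientation, and a careful accounting is needed to confine that slack to a multiplicative $\log\Delta$. A secondary difficulty is avoiding the $\log^* n$ overhead inside each phase; here the given $\Delta^2$-coloring is essential as a pre-computed symmetry breaker so that the inner pruning loop of length $O(\log\Delta)$ does not pay any additional $n$-dependent cost.
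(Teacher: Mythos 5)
First, note that the paper itself does not prove this lemma: it is imported as a black box from Barenboim~\cite{Bar15}, which is precisely why it is stated with a citation. So your sketch has to stand on its own as a reconstruction of that result, and as written it does not, because the global progress argument is where the whole difficulty lies and it is not established. Your counting claim is stated backwards: under your threshold rule a vertex is forced into $V_i$ only when its residual degree has dropped \emph{below} $T=\Theta(\beta)$, i.e.\ only after it has already lost roughly $\deg_G(v)-T$ neighbors; a vertex that has lost few neighbors is exactly the one that does \emph{not} qualify, so ``otherwise $v$ would have qualified'' does not follow, and no per-phase loss of $\Omega(\beta)$ neighbors is derived from the mechanism you describe. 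In fact the mechanism can stall for far more than $k$ phases. Take a layered graph with layers $L_1,\ldots,L_m$, $m=\Delta^2$, where each vertex of $L_j$ has $c\beta$ neighbors in $L_{j+1}$ and $c\beta$ in $L_{j-1}$ (with $2c\beta\le\Delta$, achievable in the relevant range of $k$), and give layer $L_j$ the color $j$ of the $\Delta^2$-coloring. Orienting toward the larger color means every vertex outside the top surviving layer has $c\beta>\beta$ tentative out-neighbors and withdraws, while its residual degree stays $\Theta(c\beta)\ge T$, so the forced-join rule never fires; consequently each phase colors only the current top layer, and completing the partition takes $\Theta(\Delta^2)\gg k$ phases. (And if one enlarges $T$ so that forced joins do fire, you must explain why a forced vertex does not violate the outdegree budget $\beta$ inside $V_i$ --- a tension your sketch never resolves.)

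Two further points. The claim that the withdrawal/pruning loop ``stabilizes after $O(\log\Delta)$ rounds'' is unsupported; with your rule (withdraw if outdegree among tentative, higher-colored neighbors exceeds $\beta$) withdrawals only decrease other vertices' outdegrees, so a single round already yields a fixed point --- the $\log\Delta$ factor is not doing the work you attribute to it, and the vague ``halve the excess degree'' recursion is not an argument. The substantive content of the lemma is exactly the part missing here: a rule guaranteeing that \emph{every} still-uncolored vertex makes $\Omega(\Delta/k)$-type progress in each of only $k$ phases while the defect is kept at $O((\Delta/k)\log\Delta)$ and each phase costs only $O(\log\Delta)$ rounds given the $\Delta^2$-coloring. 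Barenboim's construction achieves this with a considerably more careful procedure (this is why the logarithmic slack in $\beta$ appears at all), and the present paper deliberately uses it as an off-the-shelf subroutine rather than reproving it; your proposal, as it stands, does not recover it.
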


Our conflict coloring procedure will assume our graph $G$ is already equipped with a $\Delta^2$-coloring. This can be initially computed using Linial's algorithm~\cite{Lin92}, in $\log^* n$ rounds.

\begin{lemma}
\label{lem:recursive}
Given a $\Delta^2$-vertex coloring of graph $G$ of maximum degree at most $\Delta$, there is an algorithm which solves any conflict-coloring instance on $G$ having conflict degree at most $d$ and color lists $L$ such that $|L(v)|\geq d\cdot \deg_G(v) +1$ for all $v\in V$, in at most $O(\sqrt \Delta \log^{1.5} \Delta (\log \Delta + \log^* d))$ rounds.
\end{lemma}
\begin{proof}
We restrict considerations to the case where $\Delta$ is larger than some fixed constant $\Delta'>0$; otherwise, an appropriate conflict coloring can be obtained in $O(\Delta'^2) = O(1)$ rounds by greedily assigning in each round colors to all vertices of successive independent sets, corresponding to color classes of the given $\Delta^2$-coloring of $G$.

We will design a conflict-coloring procedure $A$, which satisfies the assumptions of the lemma. For a graph $G$, the procedure starts by constructing the $\beta$-arbdefective $k$-coloring $V = V_1 \cup \ldots \cup V_k$ from Lemma~\ref{lem:arbdefective}, for a certain parameter $k$ that will be explicitly stated later. Each of the subgraphs $G[V_i]$ now has an edge orientation with maximum outdegree at most $\beta = \frac \Delta k \log \Delta$, and its vertices are also equipped with locally unique identifiers in the range $\{1,\ldots,\Delta^2\}$ by virtue of the given $\Delta^2$-vertex coloring.

Now, we are ready to solve the conflict-coloring problem on $G$ for a given assignment of lists $L$ such that $|L(v)| \geq d \cdot\deg_G(v)+1$ for all $v\in V$. Our algorithm proceeds in $k$ stages, obtaining in the $i$-th stage a valid (final) conflict-coloring of $G[U_i]$ for a specifically defined subset $U_i \subseteq V_1 \ldots \cup V_i$ (we let $U_0 = \emptyset$), i.e., $\out(v) \in L(v)$ and the color pair $(\out(v),\out(u))$ is not forbidden for the edge $(v,u)$, for all $v \in U_i$, $u \in N_{G[U_i]}(v)$. Let $S^v(u,c_u) \subseteq L(v)$ denote the set of colors available to a node $v$ which are in conflict with a color $c_u$ at neighboring node $u$; we recall that  $|S^v(u,c_u)|\leq d$.  For $i\geq 1$, given a valid conflict-coloring of $G[U_{i-1}]$ at the beginning of the stage, we create for each $v \in V_i$ a list of colors $L'(v) = L(v) \setminus \bigcup_{u \in U_{i-1} \cap N_G(v)} S^v(u,\out(u))$, which may be used at $v$ to extend the conflict coloring of $U_{i-1}$. Now, we use Lemma~\ref{lem:techlemma} to perform a conflict coloring, restricted to color lists $L'$, for the oriented subgraph of $G[V_i]$ induced by those vertices $v\in V_i$, for which the assumptions of the Lemma are satisfied (i.e., $|L'(v)| \ge 10 d \beta^2 \ln \Delta$). This coloring routine takes $O(\log^*\Delta + \log^* d)$ rounds.

We observe that if a vertex $v \in V_i$ is colored during the phase, then it receives a color $\out(v) \in L'(v) \subseteq L(v)$, which does not conflict with the colors of any of its neighbors in $U_{i-1}$ or simultaneously colored vertices from $V_i$; we thus construct $U_i$ by adding to $U_{i-1}$ all vertices colored in the current phase.

If, on the other hand, if vertex $v\in V_i$ does not receive a color, then we must have $|L'(v)| < 10 d \beta^2 \ln \Delta$. By definition, $L'(v)$ consists of the colors in $L(v)$ which are not in conflict with colors chosen in a previous step. For a previously colored neighbor $u \in U_i$, the color $\out(u)$ is in conflict with at most $d$ colors in $L(v)$. Hence, the number of already colored neighbors is $|N_{G[U_{i-1}]}(v)| \geq (|L(v)| - |L'(v)|)/d > \deg_G(v) - 10 \beta^2 \ln \Delta$. In other words, there are at most $10 \beta^2 \ln \Delta$ neighbors of $v$ who did not receive a color yet.

Finally, at the end of the $k$-th stage of the coloring process, we are left with a set $V^* = V \setminus U_k$ of uncolored vertices.

We observe that our conflict-coloring of $G$ can now be completed correctly by conflict-coloring the graph $G^* = G[V^*]$. We define $\Delta^* = 10 \beta^2 \ln \Delta$, having $\Delta^* \geq \Delta_{G^*}$. Moreover, we can complete the conflict-coloring of $G$ by merging the so-far obtained coloring $\out$ on $U_k$ with the conflict-coloring of $G^*$, with inherited conflicting color pairs and color lists $L^*$ defined for $v\in V^*$ as:
$$
L^*(v) = L(v) \setminus \bigcup_{u \in U_{k} \cap N_G(v)} S^v(u,\out(u)).
$$
Since $|L(v)| \geq  d \deg_{G}(v) + 1$ and $U_{k} \cap N_G(v) = \deg_{G}(v) - \deg_{G^*}(v)$, it follows that $|L^*(v)| \geq d \deg_{G^*}(v) + 1$, for all $v\in V^*$. Thus, we may now complete procedure $A$ by recursively applying $A$ to find a list-coloring on $G^*$ with lists $L^*$, and merge the obtained colorings for $U_k$ and $V^*$. By assumption, procedure $A$ on $G^*$ must be given a $(\Delta^*)^2$-vertex coloring of $G^*$, which we can compute using Linial's color reduction mechanism, based on the given $\Delta^2$-coloring of $G$, in $\log^* \Delta$ rounds. Overall, denoting by $T_A(\Delta)$ an upper bound on the running time of algorithm $A$ on a graph of maximum degree at most $\Delta$, we obtain the following bound:
$$T_A(\Delta) \leq O(k \log \Delta) + O(k (\log^* \Delta + \log^* d)) + O(\log^* \Delta) + T_A( O(\beta^2\log \Delta) ),$$
where the first component of the sum comes from the routine of Lemma~\ref{lem:arbdefective}, the second one is the time of the $k$ stages of coloring graphs $G[V_i]$, the third stage is the time of $(\Delta^*)^2$-coloring graph $G^*$, and the final stage comes from the recursive application of procedure $A$. Taking into account that $\beta = O(\frac \Delta k \log \Delta)$, we obtain:
$$T_A(\Delta) \leq O(k (\log \Delta + \log^* d)) + T_A( O(\tfrac {\Delta^2}{k^2} \log^3 \Delta) ).$$
The above expression is minimized for an appropriately chosen (sufficiently large) value $k = O(\sqrt{\Delta \log^3 \Delta})$, for which we eventually obtain $T_A(\Delta) = O(\sqrt \Delta \log^{1.5} \Delta (\log \Delta + \log^* d))$.
\end{proof}

We thus obtain the main result of our paper.

\begin{theorem}\label{theo:main}
There is a distributed algorithm which solves any conflict-coloring instance on $G$ with conflict degree at most $d$ and color lists $L$ such that $|L(v)|\geq d \deg_G(v) +1$ for all $v\in V$, in at most $O(\sqrt \Delta \log^{1.5} \Delta (\log \Delta + \log^* d)) + \log^* n$ rounds.\qed
\end{theorem}

We remark that, for any conflict coloring problem in which the conflict degree $d$ is constant or bounded by any reasonable function of $\Delta$ (i.e., $\log^* d = O(\log \Delta)$), the obtained round complexity simplifies to $O(\sqrt{\Delta}\log^{2.5}\Delta) + \log^* n$. In particular, for the case of $(\Delta+1)$-list-coloring, we have $d =1$, giving the following corollary.

\begin{corollary}
There is a distributed algorithm for the distributed $(\Delta+1)$-list-coloring problem, performing in $O(\sqrt{\Delta}\log^{2.5}\Delta) + \log^* n$ rounds.
\end{corollary}

%%%%%%%%%%%%%%%%%%%%%%%%%%%%%%%%%%%%%%%%%%%%%%%%%%%%%
\section{A Centralized Local Algorithm for Conflict-Coloring}\label{sec:centralized}
%%%%%%%%%%%%%%%%%%%%%%%%%%%%%%%%%%%%%%%%%%%%%%%%%%%%%

In this section, we provide algorithms for solving the conflict coloring problem in the model of centralized local computation. These LCAs are obtained by adapting our distributed algorithms for the \LOCAL model to the centralized local model, using the guidelines in~\cite{ColoringCentralized}. As a special case, we obtain an LCA for $(\Delta+1)$-coloring algorithm with a smaller probe complexity (in terms of $n$ and $\Delta$) than the best previously known approach. Throughout this section we assume a reasonably small conflict degree for the problem (i.e., $\log^* d = O(\log \Delta)$).

\begin{theorem}
There is a deterministic oblivious LCA for solving an instance of $(l,d)$-conflict coloring, satisfying the following:
\begin{itemize}
\item if $l/d\ge10 \Delta^2 \ln \Delta$, then the algorithm performs $\Delta^ {O(\log^ * \Delta)}\log^ * n$ probes per query.
\item if $l/d > \Delta$, then the algorithm performs $\Delta^{O(\sqrt \Delta \log^{2.5} \Delta)} \log^* n$ probes per query.
\end{itemize}
\end{theorem}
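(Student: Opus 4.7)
The plan is to derive the claimed LCA by simulating our distributed algorithms from Sections~\ref{sec:deltasquare} and~\ref{sec:main} in the centralized setting, using the general simulation methodology from~\cite{ColoringCentralized}. The naive simulation of a $T$-round \LOCAL algorithm in the LCA model reads the entire radius-$T$ ball around the queried vertex, costing $\Delta^{O(T)}$ probes. This is acceptable when $T$ depends only on $\Delta$, but it explodes to $\Delta^{O(\log^* n)}$ when applied to the $\log^* n$-term of our complexity. As in~\cite{ColoringCentralized}, the key is to split each of our algorithms into two consecutive stages: an initial \emph{Linial-type color-reduction stage} of $\log^* n$ \LOCAL rounds, which, thanks to the very specific recursive structure of Linial's color reduction, admits an LCA implementation with only a multiplicative $\log^* n$ overhead in probe count; and a subsequent stage which uses only the small auxiliary coloring produced by the first stage (together with ID-oblivious primitives) and which therefore simulates directly at cost $\Delta^{O(T)}$, where $T$ is the complexity of this second stage.

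For the first bullet ($l/d \ge 10 \Delta^2 \ln \Delta$), I would apply Lemma~\ref{lem:techlemma} with $s = \Delta^2$, using a $\Delta^2$-coloring of the graph (obtained from the node identifiers by Linial's algorithm in $\log^* n$ rounds) as the initial auxiliary coloring $\lambda$. Under the standing assumption $\log^* d = O(\log \Delta)$, the subsequent simplification stage runs in $T = O(\log^* \Delta + \log^* d) = O(\log^* \Delta)$ rounds, and is ID-oblivious since all constructions inside Lemmas~\ref{lem:simple} and~\ref{lem:greedy} depend only on the color lists, the conflict relation, and $\lambda$. The LCA pays $\log^* n$ probes to resolve the Linial stage around the queried vertex, and $\Delta^{O(T)} = \Delta^{O(\log^* \Delta)}$ probes for the second stage, for a total of $\Delta^{O(\log^* \Delta)} \log^* n$ probes per query.

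For the second bullet ($l/d > \Delta$), I would apply Theorem~\ref{theo:main} instead. Its proof decomposes again into an initial $\log^* n$-round Linial-type $\Delta^2$-coloring, followed by the recursive arbdefective-coloring construction of Lemma~\ref{lem:recursive} and the instance simplification from Section~\ref{sec:deltasquare}, all of which take $T = O(\sqrt{\Delta}\log^{2.5}\Delta)$ rounds and rely only on the auxiliary $\Delta^2$-coloring and on ID-oblivious subroutines (in particular, the arbdefective colorings of Lemma~\ref{lem:arbdefective} are invoked on an already colored graph and make no further use of identifiers). The same decomposition gives $\log^* n$ probes for the Linial stage and $\Delta^{O(T)} = \Delta^{O(\sqrt{\Delta}\log^{2.5}\Delta)}$ probes for the second stage, yielding the claimed probe count.

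The main technical hurdle is bookkeeping: one must verify that, at every stage after the initial Linial phase, the subroutines used in our algorithms are genuinely ID-oblivious (so that the LCA's answers are consistent across queries without storing state) and that all information needed at a queried vertex $v$ is determined by the topology of a radius-$T$ ball around $v$ together with the $\Delta^2$-coloring restricted to that ball. Both properties were built into Lemmas~\ref{lem:simple}, \ref{lem:greedy}, \ref{lem:techlemma} and into the construction of Theorem~\ref{theo:main}; combined with the deterministic Linial LCA of~\cite{ColoringCentralized}, this yields the claimed deterministic oblivious LCA with the stated probe complexities.
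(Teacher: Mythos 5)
Your proposal is correct and follows essentially the same two\mbox{-}phase approach as the paper: run the deterministic $\Delta^2$\mbox{-}coloring LCA of~\cite{ColoringCentralized} (paying $\poly(\Delta)\log^* n$ probes per affected vertex) for every node in the radius\mbox{-}$r$ ball around the query, then locally simulate the remaining ID\mbox{-}oblivious $r$\mbox{-}round algorithm, with $r=O(\log^*\Delta)$ (from Lemma~\ref{lem:techlemma}) or $r=O(\sqrt\Delta\log^{2.5}\Delta)$ (from Lemma~\ref{lem:recursive}), yielding $\Delta^{O(r)}\log^* n$ probes in total. The only small imprecision is in the final bookkeeping sentence, where the per\mbox{-}stage costs are phrased as if additive; the earlier remark about a \emph{multiplicative} $\log^* n$ overhead is the correct reading, since the $\Delta^2$\mbox{-}coloring LCA must be invoked for each of the $\Delta^{O(r)}$ vertices in the relevant ball, not just for the queried vertex.
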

	
\begin{proof}
The proof relies on the method from \cite{PR07} for simulating distributed algorithms for the \LOCAL model in the centralized local model (cf.~also~\cite{MISCentralized,ColoringCentralized}). Suppose that we have a distributed local algorithm running in $r$ rounds. We can simulate its execution in the centralized local model with $\Delta^r$ probes, as follows: to answer a query for a node $v$, we probe the whole $r$-neighborhood of $v$, and then run the local algorithm on this neighborhood. Applying this technique directly to a distributed conflict coloring algorithm whose runtime is of the form $O(f(\Delta))+\log^* n$, where $f$ represents some non-decreasing function, we would get an LCA with probe complexity $\Delta^{O(f(\Delta)) + \log^* n}$. To get the $\log^* n$ term out of the exponent, we modify the method in~\cite{ColoringCentralized} a bit. For this purpose, notice that if we assume that we already know a $\Delta^2$-coloring of $G$, then our conflict-coloring algorithms perform in a distributed manner in a number of rounds dependent on $\Delta$, only (cf.~Lemma~\ref{lem:techlemma} and Lemma~\ref{lem:recursive}, respectively, for the two considered cases of the problem). Moreover, there exists an LCA for $\Delta^2$-coloring which performs in $O(\poly(\Delta)) \log^* n$ probes per query, due to~\cite{ColoringCentralized}. We thus propose an LCA for $(l,d)$-conflict-coloring, which, in order to solve a query for a vertex $v$, performs in two phases:
\begin{enumerate}
\item Perform multiple runs of the $\Delta^2$-coloring LCA from~\cite{ColoringCentralized} for queries corresponding to all nodes in the $r$-neighborhood of $v$;
\item Simulate $r$ rounds of a distributed algorithm for $(l,d)$-conflict-coloring for node $v$ using the given $\Delta^2$-coloring of the $r$-neighborhood of $v$.
\end{enumerate}
The first phase requires $\Delta^r \poly(\Delta) \log^* n$ probes of the input graph (i.e., $\poly(\Delta) \log^* n$ probes for each of the $\Delta^r$ queries pased to the $\Delta^2$-coloring LCA), while the second phase does not require any additional probes. To be able to run the $(l,d)$-conflict-coloring algorithm on the $r$-neighborhood, for the case $l/d \ge 10 \Delta^ 2 \ln \Delta$, we set $r = c \log^* \Delta$, for some sufficiently large positive constant $c$ (cf.~Lemma~\ref{lem:techlemma}). This yields an LCA performing $\Delta^ {O(\log^ * \Delta)}\log^ * n$ probes per query.

We apply essentially the same method for the case $l/d > \Delta$, putting $r = c\sqrt \Delta \log^{2.5} \Delta$ for some sufficiently large positive constant $c$ (cf.~Lemma~\ref{lem:recursive}). We obtain an LCA performing $\Delta^{O(\sqrt \Delta \log^{2.5} \Delta)} \log^* n$ probes per query.
\end{proof}

Considering list-coloring as a special case of conflict-coloring, we get the following corollary.

\begin{corollary}
There is a deterministic oblivious LCA for list-coloring, which runs in $\Delta^ {O(\log^ * \Delta)}\log^ * n$ probes per query when all color lists are of length at least $10 \Delta^2 \ln \Delta$, and in $\Delta^{O(\sqrt \Delta \log^{2.5} \Delta)} \log^* n$ probes per query when all color lists are of length at least $\Delta+1$.
\end{corollary}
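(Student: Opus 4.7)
The plan is to derive the corollary as a direct specialization of the preceding theorem on LCAs for $(l,d)$-conflict coloring. The key observation is that ordinary list-coloring is exactly the instance of conflict coloring in which, for every edge $\{u,v\}$ and every color $c$ common to $L(u)$ and $L(v)$, the only conflict pair is $(c,c)$. Thus the conflict degree is $d=1$, which in particular satisfies the assumption $\log^* d = O(\log \Delta)$ under which the theorem is stated.

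First I would formally translate a list-coloring instance into the conflict-coloring framework of Section 2.2: given the list $L(v)$ at each vertex $v$, assign to every port $i$ of $v$ the list $C_i(v) = \{(c,c) : c \in L(v) \cap L(v_i)\}$, where $v_i$ is the neighbor reached via port $i$. Consistency of the conflict lists across the two endpoints of an edge is immediate from symmetry of the relation $c=c$. Any valid output $\mathsf{out}(v) \in L(v)$ with $\mathsf{out}(u) \neq \mathsf{out}(v)$ on every edge is a list-coloring, and conversely any list-coloring is a valid output for this conflict-coloring instance.

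Next I would read off the ratio $l/d$ in each of the two regimes. When every list has size at least $10\Delta^2 \ln \Delta$, we have $l \ge 10\Delta^2 \ln \Delta$ and $d=1$, so $l/d \ge 10\Delta^2 \ln \Delta$, matching the hypothesis of the first bullet of the theorem; this yields the $\Delta^{O(\log^* \Delta)}\log^* n$ probe bound. When every list has size at least $\Delta+1$, we have $l/d \ge \Delta+1 > \Delta$, matching the hypothesis of the second bullet of the theorem, yielding $\Delta^{O(\sqrt \Delta \log^{2.5} \Delta)}\log^* n$ probes per query.

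There is essentially no obstacle beyond checking the reduction is faithful: the LCA produced by the theorem is deterministic and oblivious, so these properties are inherited by the list-coloring LCA, and a query ``what color does $v$ receive?'' in list-coloring is answered by posing the same query to the underlying conflict-coloring LCA on the reduced instance. The probe complexity bounds carry over unchanged.
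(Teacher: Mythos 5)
Your proposal is correct and matches the paper's (implicit) argument exactly: the paper obtains this corollary by viewing list-coloring as the special case of conflict coloring with conflict pairs $(c,c)$ only, hence $d=1$, and then invoking the two bullets of the LCA theorem with $l/d \ge 10\Delta^2\ln\Delta$ and $l/d \ge \Delta+1 > \Delta$ respectively. Your explicit encoding of the conflict lists and the check that determinism and obliviousness are inherited are exactly the routine verifications the paper leaves to the reader.
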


\section{Conclusion}

This paper presents the problem of $(l,d)$-conflict-coloring in a twofold light. First of all, we show that it is a generalization of numerous symmetry-breaking tasks, which can be solved efficiently in a distributed setting. Secondly, we rely on conflict coloring as a tool to describe intermediate instances of tasks when applying the simplification technique used in our algorithms (cf.~Lemma~\ref{lem:simple}).  In view of our results, the deterministic round complexities of $(\Delta+1)$-coloring, $(\Delta+1)$-list-coloring, and $(l,d)$-conflict-coloring with $l/d>\Delta$, all collapse to $\Otilde (\sqrt \Delta) + \log^* n$ rounds. The sufficiently large value of the ratio $l/d$ in the conflict coloring formulation appears to be what sets these problems apart from not easier tasks, such as MIS, for which no approaches with deterministic $o(\Delta) + \log^* n$ runtime are currently known.

We close the paper by remarking briefly on practical aspects, related to the amount of local computations which individual nodes need to perform to run the proposed algorithms. The most computationally-intensive steps are related to Lemma~\ref{lem:greedy}, which relies on an enumeration of a potentially large set of inputs $I$ to perform a color assignment to each element of the set. The size of this set $I$, and consequently the complexity of local computations of our algorithms, can be bounded as $2^{\Delta^{O(\log^*\Delta)}}$. This value is polynomially bounded with respect to $n$ for values of $\Delta = (\log n)^{o(1/\log^* n)}$. Since the enumeration of set $I$ is the only bottleneck in our algorithms, there exist several ways of speeding up local computations. For example, one can introduce into the algorithms an element of non-uniformity with respect to maximum degree $\Delta$, and for a given upper bound on $\Delta$, construct the solution in Lemma~\ref{lem:greedy} through a pre-computed hash function on set $I$, known to the algorithm, rather than a greedy color selection algorithm. This reduces the local computation time of our algorithms to $\Delta^{O(\log^*\Delta)}$, while preserving the same asymptotic bounds on the round complexity. In the context of LCA's discussed in Section~\ref{sec:centralized}, the cost of local computations in the approach is comparable to its probe complexity. In this sense, our algorithms may be considered satisfactory from a practical perspective in almost the entire range of $\Delta$ sub-polynomial in $n$, which is naturally the main area of focus.

%%%%%%%%%%%%%%%%%%%%%%%%%%%%%%%%%%%%%%%%%%%%%%%%%%%%%
\newpage

%%%%%%%%%%%%%%%%%%%%%%%%%%%%%%%%%%%%%%%%%%%%%%%%%%%%%

\section*{Acknowledgements}

We have revised Lemma~\ref{lem:techlemma} and its proof following comments from Michael Elkin and Mohsen Ghaffari.

\appendix
\newpage
\section{Remark on Conflict Coloring Formulations for MIS}

The Proposition below shows that there does not exist a $(l,d)$-conflict coloring formulation of MIS with a ratio $l/d>2$, which can be decoded by nodes into a valid MIS by a deterministic local algorithm without subsequent communication. The argument is laid out for the trivial case of a star, i.e., for a tree of diameter 2, and is intended mainly to highlight the general point that the constraints of the MIS problem cannot be conveniently expressed through sets of constraints on individual edges of the graph.

\begin{proposition}
Suppose graph $G$ is a star and consider any instance of $(l,d)$-conflict-coloring over color set $\caC$ on $G$. If there exists a function $f : \caC \to \{0,1\}$, such that a solution $c : V \to \caC$ to the considered conflict coloring problem is valid if and only if $\{v \in V: f(c(v))=1\}$ is a MIS on $G$, then $l/d\leq 2$.
\end{proposition}
\begin{proof}
Let $L(v)$ be the list of colors allowed for a vertex $v\in V$ in the considered conflict coloring instance on the star. Let $L_1(v) \subseteq L(v)$ be the set of all colors $a \in L(v)$ such that $f(a)=1$ and color $a$ may be assigned to vertex $v$ in at least one valid solution to the considered conflict coloring instance, and let $L_0(v) = L(v) \setminus L_1(v)$. Fix $r$ to be the central vertex of the star. Since each of the two possible MIS's on the star must correspond to some solution to the considered conflict coloring problem, we have $L_0(r) \neq \emptyset$ and $L_1(r) \neq \emptyset$. A conflict must exist between each color of $L_1(r)$ and each color of $L_1(u)$, for all $u \neq r$, since otherwise one could extend some conflict coloring $c$ of $G\setminus\{u\}$, for which $f(c(r))=1$, in such a way that $f(c(u))=1$, which does not correspond to a valid MIS. It follows that $d\geq \max_{u \in V\setminus\{r\}}|L_1(u)|$. Moreover, for each color $a \in L_0(r)$, there must exist a vertex $w \neq r$ such that for the edge $\{r,w\}$, color $a$ at vertex $r$ is in conflict with all colors $b\in L_0(w)$ at vertex $w$; otherwise, we could construct a valid conflict coloring in which $c(r)=a$ and $c(w)=b$. This would be a contradiction since neither $w$ nor its only neighbor $r$ would not be in the corresponding MIS because $f(c(r)) = f(c(w)) = 0$. It follows that $d\geq L_0(w)$. By combining the last two observations, we have $2d \geq L_0(w) + \max_{u \in V\setminus\{r\}}L_1(u) \geq L_0(w) + L_1(w) \geq l$, which gives the claim.
\end{proof}

\section{Proof of Lemma~\ref{lem:simple}}\label{sec:lemproof}

We construct instance $P_{i+1} = (L_{i+1}, F_{i+1})$ over color set $\caC_{i+1}$ from instance $P_i = (L_i, F_i)$ over color set $\caC_i$ as follows. We define the color set $\caC_{i+1}$ as the collection of all the subsets of size $k_i = \lfloor \frac{l_i}{e^2 d_i \oDelta} \rfloor$ of $\caC_i$. For each node $v$, we will now appropriately define its color list $L_{i+1}(v)\subseteq {L_i(v) \choose k_i}$ by selecting into $L_{i+1}(v)$ a constant proportion of all $k_i$-element-subsets of $L_i(v)$. The adopted value of parameter $k_i$ is the result of a certain tradeoff: increasing $k_i$ further would indeed increase the list length $l_{i+1}$, but would also result in an explosion of the number of conflicts $d_{i+1}$ (the ratio $l_{i+1}/d_{i+1}$ needs to be controlled in view of Clause~3). The details of the construction of lists $L_{i+1}$ are deferred until later in the proof.

Next, let $\tau_i = \lfloor \frac {k_i} \oDelta \rfloor -1$ be a threshold parameter, which we will use to define the edge set of the conflict graph $F_{i+1}$. For a pair of neighboring nodes $\{u,v\} \in E$, we denote by $S_i^u(v, c_v)$ the set of all colors at vertex $u$ in conflict with color $c_v$ at vertex $v$ in problem $P_i$. We now define the following symmetric conflict relation $(\sim)$ on $V \times \caC_{i+1}$ for the problem $P_{i+1}$:
$$
(u, C_u) \sim (v, C_v) \Leftrightarrow \left\{
\begin{array}{r}
\left|C_u \cap \bigcup_{c_v \in C_v}S_i^u (v, c_v)\right| > \tau_i \\
\text{ or \ } \left|C_v \cap \bigcup_{c_u \in C_u}S_i^v (u, c_u)\right| > \tau_i
\end{array} \right.
$$

%\{(u, C_u), (v, C_v)\} \in E(F_{i+1})
When looking a the left-hand-side of the above relation, it is convenient to think of $C_u$ and $C_v$ as candidates for color values, which are being considered for inclusion in the lists $L_{i+1}(u)$ and $L_{i+1}(v)$ of nodes $u$ and $v$, respectively, in problem $P_{i+1}$. When looking at the right-hand side, we treat $C_u$ and $C_v$ as sets of colors with respect to problem $P_i$. Subsequently, when defining the color lists in problem $P_{i+1}$, we will eliminate those configurations of candidates which generate too many conflicts in node neighborhoods in problem $P_i$.

The above relation, when restricted to permissible vertex colors, defines conflict edges for $F_{i+1}$: given colors $C_u \in L_{i+1}(u)$ and $C_v \in L_{i+1}(v)$ (where we recall that $C_u \subseteq L_i(u)$ and $C_v \subseteq L_i(v)$), we put:
\begin{equation}
\label{eq:defef}
\{(u, C_u), (v, C_v)\} \in E(F_{i+1}) \iff (u, C_u) \sim (v, C_v).
\end{equation}

For this definition of the edge set of $F_{i+1}$, we immediately show how to convert any valid solution to $P_{i+1}$ into a solution for $P_i$ in a single communication round. Indeed, observe that if a node $v$ knows its output $\out_{i+1}(v)$ for $P_{i+1}$ and the outputs of all its out-neighbors in the considered orientation, then it can obtain a valid color in $P_i$ by returning an arbitrary element of the set $\out_{i+1}(v)$ which does not conflict with any of the colors belonging to the corresponding sets of its out-neighbors:
\begin{equation}
\label{eq:down}
\out_i (v) \in \out_{i+1} (v) \setminus \bigcup_{u \in \vec N_G(v)}\bigcup_{c_u \in \out_{i+1}(u)} S_i^v (u, c_u).
\end{equation}
Since, by assumption, the considered solution to $P_{i+1}$ was correct, we have $(u, \out_{i+1}(u)) \not\sim (v, \out_{i+1}(v))$. It follows from the definition of relation $(\sim)$ that in the right-hand-side of expression~\eqref{eq:down}, each element of the union over $u\in \vec N_G(v)$ eliminates at most $\tau_i$ elements from the set $\out_{i+1} (v)$. Moreover, since we have $|\out_{i+1} (v)| = k_i \geq \oDelta \tau_i + 1$, the set from which we are choosing $\out_i (v)$ is always non-empty. Finally, the construction of~\eqref{eq:down} is such that color $\out_i (v)$ cannot conflict with any other color assigned to any of its neighbors in the obtained solution to $P_i$. Thus the obtained solution to $P_i$ is conflict-free with respect to every edge of $G$, which completes the proof of Clause~2 of the Lemma.

In the rest of the construction, we focus on a careful construction of color lists $L_{i+1}(v)\subseteq {L_i(v) \choose k_i}$, so as to ensure the local constructibility of the input instance to $P_{i+1}$ in a single round (Clause~1) and a sufficiently large ratio $l_{i+1}/d_{i+1}$ (Clause~3). The value of $d_{i+1}$ will be fixed as:
$$
d_{i+1}:= 8 \Delta {k_i d_i \choose \tau_i} {l_i \choose k_i - \tau_i}.
$$

We will proceed with the construction of lists $L_{i+1}$ by including all $k_i$-element subsets of $L_i(v)$ in $L_{i+1}(v)$, and then we eliminate some colors from $L_{i+1}(v)$ which would generate too many conflicts in $P_{i+1}$ with any of the possible colors for neighbors $u \in N_G(v)$. Formally, for all $v \in V$, we set:
\begin{align}
\label{eq:lione}
D_{i,v}(u) &:=  \left\{C_v : |\{C_u : (u, C_u) \sim (v, C_v)\}| > \frac{d_{i+1}}{2}\right\} \\
L_{i+1}(v) &:= {L_i(v) \choose k_i} \setminus \bigcup_{u \in N_G(v)} D_{i,v}(u)
\end{align}
The above setting guarantees that the conflict degree bound of $d_{i+1}$ is indeed satisfied by problem $P_{i+1}$. We now show that the condition $|L_{i+1}(v)| \geq \frac{1}{2}{l_i \choose k_i}$ is met for all vertices. To lower bound the size of $L_{i+1}(v)$, we will prove that for each neighbor $u$ of a node $v$, at most $\frac 1 {2 \Delta} {l_i \choose k_i}$ subsets are removed from $L_{i+1}(v)$ when considering conflicts between $u$ and $v$.

\begin{claim}
For any $v \in V$ and $u \in N_G(v)$, we have:
%$$
%\left|\left\{C_v \in {L_i(v) \choose k_i} \ :\  |\{C_u \in {\textstyle{ L_i(u) \choose k_i}} : (u, C_u) \sim (v, C_v)\}| > \tfrac{d_{i+1}}{2}\right\}\right| \leq \frac 1 {2 \Delta} {l_i \choose k_i}.
%$$
\[
|D_{i,v}(u)| \leq \frac 1 {2 \Delta} {l_i \choose k_i}
\]
\end{claim}

\begin{proof}
Consider the bipartite graph with vertex partition $A_v \cup A_u$, where $A_v = \{(v, C_v) : C_v \in {L_i(v) \choose k_i}\}$ and $A_u = \{(u, C_u) : C_u \in {L_i(u) \choose k_i}\}$, and a set of edges $E_\sim$ defined by the conflict relation $(u, C_u) \sim (v, C_v)$ on its nodes. Our goal is to bound the number of vertices in partition $A_v$ having degree at least $\tfrac{d_{i+1}}{2}$ with respect to $E_\sim$ . We will first bound the number of edges in $E_\sim$ as follows. For a fixed set $C_u \in {L_i(u) \choose k_i}$, we bound the number $x_1$ of sets $C_v \in {L_i(v) \choose k_i}$ satisfying the first of the conditions which appear in the definition of relation $(\sim)$:
\begin{equation}
\label{eq:sim1}
\left|C_u \cap \bigcup_{c_v \in C_v}S_i^u (v, c_v)\right| > \tau_i.
\end{equation}
Taking into account that $P_i$ is an instance of conflict coloring with conflict degree at most $d_i$, for any color $c_v$ at $v$ we have $S_i^u (v, c_v)$, and so $|\bigcup_{c_v \in C_v}S_i^u (v, c_v)| \leq \sum_{c_v \in C_v} d_i = k_i d_i$. It follows that $x_1$ can be bounded by the following expression:
$$
x_1 \leq {k_i d_i \choose \tau_i} {l_i \choose k_i - \tau_i} = \frac 1 {8\Delta} d_{i+1}.
$$
Thus, overall, the number of edges of $E_\sim$ satisfying Eq.~\eqref{eq:sim1} is at most $x_1 |A_u| \leq \frac 1 {8\Delta} d_{i+1} {l_i \choose k_i}$. By a symmetric argument, the number of edges contributed by the other condition in the definition of relation $(\sim)$ (i.e., $\left|C_v \cap \bigcup_{c_u \in C_u}S_i^v (u, c_u)\right| > \tau_i$), is also $\frac 1 {8\Delta} d_{i+1} {l_i \choose k_i}$. Overall, we have:
$$
|E_\sim| \leq  \frac 1 {4\Delta} d_{i+1} {l_i \choose k_i}.
$$
The average degree $\delta_\sim$ of a node in $A_v$ with respect to $E_\sim$ is thus bounded by $\delta_\sim \leq \frac 1 {4\Delta} d_{i+1}$. Since only at most $\frac{|A_v|}{2\Delta} = \frac 1 {2 \Delta} {l_i \choose k_i}$ nodes in $A_v$ can have a degree higher than $2\Delta \delta_\sim \leq \frac{d_{i+1}}{2}$, the claim follows.
\end{proof}
As a direct corollary of the above claim and of the definition of $L_{i+1}(v)$ in~\eqref{eq:lione}, we have obtained the sought bound $|L_{i+1}(v)| \geq \frac 1 2 {l_i \choose k_i}$. Formally, to guarantee that $P_{i+1}$ is an instance of a $(l_{i+1}, d_{i+1})$-conflict-coloring problem with lists of size precisely equal to:
$$l_{i+1}:= \frac 1 2 {l_i \choose k_i},
$$
in the case when the size of some $L_{i+1}(v)$ still exceeds  $l_{i+1}$, node $v$ removes arbitrarily some elements of $L_{i+1}(v)$ so that its size becomes exactly $l_{i+1}$. Bearing in mind the description of color lists $L_{i+1}$ according to Eq.~\eqref{eq:lione} and the edges of the conflict graph $F_{i+1}$ according to Eq.~\eqref{eq:defef}, a single-round distributed algorithm for computing an instance of $P_{i+1}$ based on an instance of $P_i$ follows directly from the construction. This completes the proof of Clause~1 of the Lemma.

Finally, we complete the proof of the lemma with the following claim, which shows that Clause~$3$ is also satisfied.

\begin{claim}
\label{claim:ratioineq}
For any $\eps>0$, the following inequality holds when $\oDelta$ is at least a sufficiently large constant: $$\frac {l_{i+1}}{d_{i+1}} > % \frac{1} {16\Delta} e^{\tau_i} =
\frac 1 {\Delta} \exp\left(\frac{1}{(e^2 + \eps)\oDelta^2}\frac{l_i}{d_i}\right).$$
\end{claim}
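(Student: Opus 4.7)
The plan is to substitute the explicit definitions and reduce the ratio to a clean exponential via standard binomial estimates. By construction,
$$
\frac{l_{i+1}}{d_{i+1}}
\;=\;\frac{\binom{l_i}{k_i}}{16\Delta\,\binom{k_id_i}{\tau_i}\binom{l_i}{k_i-\tau_i}},
$$
so I would first rewrite the numerator and denominator in a form where the ``active'' combinatorial piece is $\binom{l_i}{k_i}/\binom{l_i}{k_i-\tau_i}$. Expanding this ratio into a product of $\tau_i$ consecutive factors,
$$
\frac{\binom{l_i}{k_i}}{\binom{l_i}{k_i-\tau_i}}
\;=\;\prod_{j=1}^{\tau_i}\frac{l_i-k_i+j}{k_i-j+1}
\;\ge\;\Bigl(\tfrac{l_i-k_i}{k_i}\Bigr)^{\tau_i}.
$$
The first main step is to combine this with the standard upper bound $\binom{k_id_i}{\tau_i}\le (ek_id_i/\tau_i)^{\tau_i}$ so that
$$
\frac{l_{i+1}}{d_{i+1}}
\;\ge\;\frac{1}{16\Delta}\Bigl(\tfrac{(l_i-k_i)\,\tau_i}{e\,k_i^2\,d_i}\Bigr)^{\tau_i}.
$$

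The second main step is to plug in the defining values $k_i=\lfloor l_i/(e^2 d_i\oDelta)\rfloor$ and $\tau_i=\lfloor k_i/\oDelta\rfloor-1$. Ignoring the floors for a moment, the base of the exponent becomes
$$
\frac{(l_i-k_i)\,\tau_i}{e\,k_i^2\,d_i}
\;\approx\;\frac{l_i}{e\,k_i\,d_i\,\oDelta}
\;\approx\;e,
$$
so the whole expression is approximately $\tfrac{1}{16\Delta}e^{\tau_i}$, and then $\tau_i\approx k_i/\oDelta\approx l_i/(e^2 d_i\oDelta^2)$ yields exactly the claimed form $\tfrac{1}{\Delta}\exp\!\bigl(\tfrac{1}{e^2\,\oDelta^2}\cdot\tfrac{l_i}{d_i}\bigr)$, up to the innocuous constant factor $1/16$ which is absorbed by $1/\Delta$ once $\Delta$ is sufficiently large (or can be swept into the $\eps$ slack in the exponent for large enough $\oDelta$).

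The third step, which I expect to be the only technically annoying part, is to track the error introduced by (i) the two floor functions in the definitions of $k_i$ and $\tau_i$, (ii) the $-k_i$ term in the numerator $l_i-k_i$, and (iii) the $1/16$ constant. Each of these produces a multiplicative slack of the form $(1+o(1))$ in the base of the $\tau_i$-th power as $\oDelta\to\infty$; because the base is $\approx e$, any such $(1-\delta)$ factor with $\delta\to 0$ degrades $e^{\tau_i}$ by at most a $(1-\delta)^{\tau_i}$ factor, which I would absorb by replacing the denominator $e^2$ by $e^2+\eps$ in the exponent of the final bound. Concretely, I would fix an arbitrary $\eps>0$, choose $\oDelta$ large enough that $k_i/l_i$, $1/(k_i\oDelta)$, and $\ln(16)/\tau_i$ are each small enough to guarantee
$$
\frac{(l_i-k_i)\tau_i}{e\,k_i^2 d_i}\ \ge\ e\cdot\Bigl(1-\tfrac{\eps}{3 e^2}\Bigr)\quad\text{and}\quad
\tau_i\ \ge\ \frac{l_i}{(e^2+\eps/2)\,d_i\,\oDelta^2},
$$
and then conclude via $\ln(1-x)\ge -x-x^2$ that the resulting exponent is at least $\tfrac{1}{(e^2+\eps)\oDelta^2}\cdot\tfrac{l_i}{d_i}$, while the prefactor $1/(16\Delta)$ is at least $1/\Delta$ after a further tiny loss absorbed into $\eps$. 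The hard part is therefore purely bookkeeping the two nested $(1+o(1))$ slacks so that the final constant in the exponent degrades in a single controlled way from $e^2$ to $e^2+\eps$; no new idea beyond the binomial estimates above is needed.
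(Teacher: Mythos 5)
Your algebraic route is the paper's route: the same rewriting of $l_{i+1}/d_{i+1}$ as $\binom{l_i}{k_i}/\bigl(16\Delta\binom{k_id_i}{\tau_i}\binom{l_i}{k_i-\tau_i}\bigr)$, the same binomial estimates leading to the intermediate bound $\frac{1}{16\Delta}\bigl(\frac{(l_i-k_i)\tau_i}{e\,k_i^2 d_i}\bigr)^{\tau_i}$ (the paper gets it via $n!\ge (n/e)^n$, you via the product expansion and $\binom{k_id_i}{\tau_i}\le(ek_id_i/\tau_i)^{\tau_i}$, which is the same estimate), and the same plan of plugging in $k_i$ and $\tau_i$ and absorbing the slack into the $\eps$ in the exponent.

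However, your error bookkeeping in the third step has a genuine gap: a quantifier problem. The claim lets only $\oDelta$ be a large constant, while $l_i$ and $d_i$ are arbitrary; but $k_i=\lfloor l_i/(e^2d_i\oDelta)\rfloor$ and $\tau_i=\lfloor k_i/\oDelta\rfloor-1$ are controlled by the ratio $l_i/d_i$, not by $\oDelta$. When $l_i/d_i=O(\oDelta^2)$, you can have $k_i=O(\oDelta)$ and $\tau_i=O(1)$ (even $\tau_i\le 0$), so your requirements that $2\oDelta/k_i$, $\ln(16)/\tau_i$ be small ``for $\oDelta$ large enough'' are circular, and the desired inequality $\tau_i\ge \frac{l_i}{(e^2+\eps/2)d_i\oDelta^2}$ fails (it needs $\frac{l_i}{d_i\oDelta^2}\bigl(\frac{1}{e^2}-\frac{1}{e^2+\eps/2}\bigr)\ge 3$, i.e.\ again $l_i/d_i\ge C_\eps\oDelta^2$); in the extreme case $\tau_i=0$ your chain only yields $\frac{1}{16\Delta}$, which does not imply the claim. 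The paper closes exactly this hole with a case split: if $\frac{l_i}{d_i}\le e^2\oDelta^2\ln\Delta$, the right-hand side of the claimed inequality is below $1$ and the claim holds trivially (using $l_{i+1}/d_{i+1}\ge 1$); otherwise $k_i>\frac12\oDelta\ln\oDelta$ and $\tau_i=\Omega(\ln\oDelta)$, so all your $(1+o(1))$ slacks, as well as the absorption of the prefactor $16e^{3}$ into the surplus $c_\eps\ln\oDelta$ in the exponent, are justified as $\oDelta\to\infty$. Add this case distinction (and the accompanying lower bound on $k_i$ in the nontrivial case) and your argument matches the paper's proof.
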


\begin{proof}
Using the inequality $n! \geq \left(\frac{n}{e}\right)^n$, and the definitions of $l_{i+1}$, $d_{i+1}$, $k_i$, and $\tau_i$, we get:
\begin{eqnarray}
\frac {l_{i+1}}{d_{i+1}} &= &\frac {{l_i \choose k_i} / 2} {8 \Delta {k_i d_i \choose \tau_i} {l_i \choose {k_i - \tau_i}}} \nonumber\\
&= & \frac 1 {16 \Delta} \frac{ \tau_i! (k_i d_i- \tau_i)! (k_i- \tau_i)! (l_i-k_i+ \tau_i)!}{k_i! (l_i-k_i)! (k_i d_i)!} \nonumber\\
&\geq& \frac 1 {16 \Delta} \frac{(l_i-k_i)^ {\tau_i}  \tau_i! }{k_i^ {\tau_i} (k_i d_i)^ {\tau_i}} %\\& \geq &
\geq \frac 1 {16 \Delta}\left(\frac{(l_i-k_i)  \tau_i}{ek_i^2d_i} \right)^ {\tau_i} \label{eq:newratio}%\\& \geq &
\end{eqnarray}
Taking into account that $\tau_i = \lfloor\frac {k_i} \oDelta \rfloor -1 \geq \frac {k_i} \oDelta - 2$, $k_i = \lfloor \frac{l_i}{e^2 d_i \oDelta}\rfloor \geq \frac{l_i}{e^2 d_i \oDelta} - 1$, and so $l_i \geq e^2 d_i \oDelta k_i$, we can lower-bound the base of the last expression in \eqref{eq:newratio} as:
\begin{align*}
\frac{(l_i-k_i)  \tau_i}{ek_i^2d_i} &\geq \frac{(e^2 d_i \oDelta k_i -k_i) (\frac {k_i} \oDelta - 2)}{ek_i^2d_i} \\
&= e\left(1 - \frac{1}{e^2 d_i\oDelta}\right) \left(1 - \frac{2\oDelta}{k_i}\right) \\ &> e\left( 1- \frac{1}{\oDelta} -  \frac{2\oDelta}{k_i}\right).
\end{align*}
In what follows, we assume that $\frac {l_i}{d_i} > e^2 \oDelta^2 \ln \Delta \geq e^2 \oDelta^2 \ln \oDelta$; otherwise, the claim of the lemma is trivially true (since $\frac{l_{i+1}}{d_{i+1}} \geq 1$ always holds). We obtain that for sufficiently large $\oDelta$, $k_i \geq \oDelta \ln \oDelta - 1 > \frac 1 2 \oDelta \ln \oDelta$, and so:
\begin{align*}
\frac{(l_i-k_i)  \tau_i}{ek_i^2d_i} &> e\left( 1- \frac{1}{\oDelta} -  \frac{2\oDelta}{k_i}\right) > e\left( 1- \frac{1}{\oDelta} -  \frac{4}{\ln \oDelta}\right) \\ &> e\left( 1- \frac{5}{\ln \oDelta}\right)  > \exp(1 - \eps/10),
\end{align*}
where the last inequality holds for $\oDelta$ sufficiently large with respect to $\eps$. Now, taking into account that $\tau_i \geq \frac {k_i} \oDelta - 2 \geq \frac{l_i}{e^2 d_i \oDelta^2} - 3$, we obtain from Eq.~\eqref{eq:newratio}:
\begin{align*}
\frac {l_{i+1}}{d_{i+1}} &> \frac 1 {16 \Delta} \exp\left[(1 - \eps/10)\left(\frac{l_i}{e^2 d_i \oDelta^2} - 3\right)\right] \\ &> \frac 1 {\Delta} \exp\left(\frac{1}{(e^2 + \eps)\oDelta^2}\frac{l_i}{d_i}\right),
\end{align*}
where again the last bound holds for $\oDelta$ sufficiently large with respect to $\eps$. This completes the proof of the claim.
\end{proof}

%%%%%%%%%%%%%%%%%%%%%%%%%%%%%%%%%%%%%%%%%%%%%%%%%%%%%
\end{document}